\crefname{equation}{}{}
\DeclareMathOperator*{\argmin}{arg\,min}
\crefname{algocf}{Algorithm}{Algorithms}
\crefname{equation}{}{} 
\crefname{algocf}{Algorithm}{Algorithms}
\numberwithin{equation}{section}
\newtheorem{theorem}{Theorem}[section]
\newtheorem{proposition}[theorem]{Proposition}
\newtheorem{lemma}[theorem]{Lemma}
\crefname{claim}{Claim}{Claims}
\newtheorem{corollary}[theorem]{Corollary}
\newtheorem{conjecture}[theorem]{Conjecture}
\newtheorem*{question*}{Question}
\newtheorem{definition}[theorem]{Definition}
\newtheorem{rem}[theorem]{Remark}
\newtheorem*{definition*}{Definition}
\theoremstyle{remark}
\newcommand{\one}{\bm{1}}
\newcommand{\EE}{\mathbb{E}}
\newcommand{\NN}{\mathbb{N}}
\newcommand{\PP}{\mathbb{P}}
\newcommand{\RR}{\mathbb{R}}
\renewcommand{\SS}{\mathbb{S}}
\DeclareSymbolFont{bbold}{U}{bbold}{m}{n}
\DeclareSymbolFontAlphabet{\mathbbold}{bbold}
\newcommand{\One}{\mathbbold{1}}
\newcommand{\ba}{\bm a}
\newcommand{\bg}{\bm g}
\newcommand{\bh}{\bm h}
\newcommand{\bu}{\bm u}
\newcommand{\bv}{\bm v}
\newcommand{\bx}{\bm x}
\newcommand{\by}{\bm y}
\newcommand{\bA}{\bm A}
\newcommand{\bB}{\bm B}
\newcommand{\bD}{\bm D}
\newcommand{\bG}{\bm G}
\newcommand{\bM}{\bm M}
\newcommand{\bP}{\bm P}
\newcommand{\bU}{\bm U}
\newcommand{\bW}{\bm W}
\newcommand{\bX}{\bm X}
\newcommand{\bY}{\bm Y}
\newcommand{\sN}{\mathcal{N}}
\newcommand{\sS}{\mathcal{S}}
\DeclareMathOperator{\GOE}{GOE}
\DeclareMathOperator{\sym}{sym}
\DeclareMathOperator{\Tr}{Tr}
\DeclareMathOperator{\rank}{rank}
\DeclareMathOperator{\Unif}{Unif}
\DeclareMathOperator{\Cov}{Cov}
\newcommand{\Ex}{\mathop{\mathbb{E}}}  
\newcommand{\Px}{\mathop{\mathbb{P}}}
\DeclareMathOperator{\symvec}{symvec}
\DeclareMathOperator{\symmat}{symmat}
\newcommand{\what}{\widehat}
\newcommand{\row}{\mathrm{row}}
\newcommand{\MACI}{\mathrm{MACI}}
\newcommand{\KACI}{\mathrm{KACI}}
\newcommand\numberthis{\addtocounter{equation}{1}\tag{\theequation}}
\newcommand{\Komlos}{Koml\'{o}s}
\newcommand{\Wish}{\mathrm{Wish}}
\title{Average-Case Matrix Discrepancy: Asymptotics and Online Algorithms}
\author{Dmitriy Kunisky\thanks{Email: \texttt{dmitriy.kunisky@yale.edu}. Partially supported by ONR Award N00014-20-1-2335 and a Simons Investigator Award to Daniel Spielman.}\,}
\author{Peiyuan Zhang\thanks{Email: \texttt{peiyuan.zhang@yale.edu}. Partially supported by a Yale University Fund to Amin Karbasi.}\,}
\date{July 19, 2024}
\affil{Department of Computer Science, Yale University}
\begin{document}

\maketitle

\thispagestyle{empty}

\begin{abstract}
    We study the operator norm discrepancy of i.i.d.\ random matrices, initiating the matrix-valued analog of a long line of work on the $\ell^{\infty}$ norm discrepancy of i.i.d.\ random vectors.
    First, using repurposed results on vector discrepancy and new first moment method calculations, we give upper and lower bounds on the discrepancy of random matrices.
    We treat i.i.d.\ matrices drawn from the Gaussian orthogonal ensemble (GOE) and low-rank Gaussian Wishart distributions.
    In both cases, for what turns out to be the ``critical'' number of $\Theta(n^2)$ matrices of dimension $n \times n$, we identify the discrepancy up to constant factors.
    Second, we give a new analysis of the matrix hyperbolic cosine algorithm of Zouzias~(2011), a matrix version of an online vector discrepancy algorithm of Spencer~(1977) studied for average-case inputs by Bansal and Spencer~(2020), for the case of i.i.d.\ random matrix inputs.
    We both give a general analysis and extract concrete bounds on the discrepancy achieved by this algorithm for matrices with independent entries (including GOE matrices) and Gaussian Wishart matrices.
\end{abstract}

\clearpage

\tableofcontents

\pagestyle{empty}

\clearpage

\setcounter{page}{1}
\pagestyle{plain}

\section{Introduction}

We begin by reviewing the more familiar vector-valued versions of the matrix-valued questions we will study.
The discrepancy of a collection of input vectors $\bv_1, \dots, \bv_T \in \RR^n$, defined as
\begin{equation}
    \Delta(\bv_1, \dots, \bv_T) \colonequals \min_{\bx \in \{ \pm 1\}^T} \left\|\sum_{i = 1}^T x_i \bv_i \right\|_{\infty},
\end{equation}
is a classical quantity widely studied in combinatorics, combinatorial optimization, and computational geometry \cite{Spencer-1994-TenLecturesProbabilisticMethod,Matousek-1999-GeometricDiscrepancy,Chazelle-2000-DiscrepancyMethod}.
The following celebrated result of Spencer is one of the centerpieces of this literature.
\begin{theorem}[Spencer's ``six deviations suffice'' theorem \cite{Spencer-1985-SixDeviations,Spencer-1994-TenLecturesProbabilisticMethod}]
    \label{thm:spencer}
    If $\|\bv_i\|_{\infty} \leq 1$ for all $i \in [T]$, then $\Delta(\bv_1, \dots, \bv_T) \lesssim \sqrt{T \log(1 + \frac{n}{T})}$.
    Alternatively:
    \begin{equation}
        \Delta(\bv_1, \dots, \bv_T) \lesssim \left\{ \begin{array}{ll} \sqrt{T \log(n/T)} & \text{if } T \leq n/2, \\ \sqrt{n} & \text{if } T \geq n/2 \end{array} \right\}.
    \end{equation}
\end{theorem}

\begin{rem}[The case of many vectors]
    \label{rem:Tggn}
    Spencer's result is often stated not including the case of $T$ substantially larger than $n$, and indeed \cite{Spencer-1985-SixDeviations} only discussed $T = n$ and many subsequent works followed suit.
    Numerous works also state a suboptimal bound of the form $\sqrt{T\max\{1, \log(n / T)\}}$.
    The correct $O(\sqrt{n})$ bound when $T \geq n$ actually follows from the older ``iterated rounding'' method; see \cite{BF-1981-IntegerMakingTheorems,Barany-2008-LinearDependencies} as well as an unnumbered Corollary in Lecture~5 of \cite{Spencer-1994-TenLecturesProbabilisticMethod}.
\end{rem}
\noindent
What is interesting about Spencer's theorem is its improvement on the naive choice of i.i.d.\ random signs $x_i \sim \Unif(\{\pm 1\})$, which with high probability achieves only $\|\sum_{i =1}^T x_i \bv_i\|_{\infty} = \Theta(\sqrt{T \log n})$ for worst-case choices of $\bv_i$.
Once $T \gtrsim n$, Spencer's theorem says that the logarithmic factor in this bound may be removed with a more careful choice of $x_i$; when $T \gg n$, random signs are severely suboptimal and the aforementioned iterated rounding reduction must be applied to obtain an accurate bound.

A natural follow-up question is whether efficient algorithms can find such assignments of the signs $x_i$.
Several such algorithms were eventually discovered to achieve the same discrepancy as Spencer's non-constructive proof \cite{Bansal-2010-ConstructiveDiscrepancyMinimization,LM-2015-ConstructiveDiscrepancyRandomWalk,Rothvoss-2017-ConstructiveDiscrepancyMinimization} (see also numerous related algorithmic ideas in \cite{DGLN-2019-TowardsConstructiveBanaszczyk,BDGL-2018-GramSchmidtWalk,BDG-2019-AlgorithmKomlosBanaszczyk,HSSZ-2019-BalancingCovariatesGramSchmidt,ALS-2021-DiscrepancySelfBalancingWalk,LSS-2021-GaussianFixedPointWalk,BLV-2022-UnifiedDiscrepancyMinimization}).

Beyond that, one may also ask the more stringent question of whether \emph{online} algorithms---ones where $x_i$ can only depend on $(\bv_1, \dots, \bv_i)$---can achieve the same discrepancy.
In this case, in the \emph{adversarial} setting of Spencer's theorem, the answer is no: if $\bv_i$ may be chosen by an adversary depending on $x_1, \dots, x_{i - 1}$, then the adversary may force any online algorithm to produce $x_i$ with $\|\sum_{i =1}^T x_i \bv_i\|_{\infty} \gtrsim \sqrt{T \log n}$ \cite[Lecture 4]{Spencer-1994-TenLecturesProbabilisticMethod}.
That is, in the adversarial online setting, one cannot improve on choosing the signs $x_i$ at random (in contrast to Spencer's theorem on the offline setting).

However, it remains interesting to analyze online algorithms if we shift our attention to the \emph{average-case} setting, where the $\bv_i$ are chosen randomly rather than adversarially.\footnote{Another setting that may be viewed as lying in between these two is the \emph{oblivious adversarial} setting, where the adversary must fix $\bv_1, \dots, \bv_T$ in advance, but we focus on the average case here.}
In this case, the typical true discrepancy is understood quite precisely for different scalings of $T$ and choices of $\bv_i$.
The following result, while its most general part does not quite fall in Spencer's setting of $\|\bv_i\|_{\infty} \leq 1$, accurately captures the main phenomenon (see \cite{TMR-2020-BalancingGaussianVectors} or \cite{GKPX-2023-OGPDiscrepancy} for some discussion of other distributions).

\begin{theorem}[Theorem 1 and Remark 2 of \cite{TMR-2020-BalancingGaussianVectors}]
    \label{thm:gaussian-vectors}
    Suppose $T = T(n)$ satisfies $T \geq \epsilon n$ for some $\epsilon > 0$.
    If $\bv_1, \dots, \bv_T \sim \sN(\bm 0, \bm I_n)$ independently, then with high probability as $n \to \infty$,
    \begin{equation}
        \Delta(\bv_1, \dots, \bv_T) \asymp_{\epsilon} \sqrt{T} \, 2^{-T/n}.
    \end{equation}
    The same holds for $\bv_i \sim \Unif([-1, 1]^n)$ so long as $n \leq T / \log T$.
\end{theorem}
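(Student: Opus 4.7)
\emph{Proof plan.} I would prove matching upper and lower bounds of order $\sqrt{T} \cdot 2^{-T/n}$ via the moment method on the counting random variable
\begin{equation*}
N(t) \colonequals \abs{\setcond{\bx \in \{\pm 1\}^T}{\snorm{\textstyle \sum_{i=1}^T x_i \bv_i}_\infty \leq t}}.
\end{equation*}
The lower bound is a short first-moment/union-bound argument; the upper bound is a second-moment Paley--Zygmund argument on $N(t)$, followed by a concentration estimate to boost the success probability. By rotational symmetry of the $\bv_i$'s, for any fixed $\bx \in \{\pm 1\}^T$ we have $\sum_i x_i \bv_i \sim \sN(\zero, T\bI_n)$, so its $n$ entries are i.i.d.\ $\sN(0,T)$ and $\Pr[\bx\text{ is good}] = (2\Phi(t/\sqrt{T}) - 1)^n \leq (Ct/\sqrt{T})^n$ for a universal $C$ and $t \leq \sqrt{T}$. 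Summing over $\bx \in \{\pm 1\}^T$ yields $\Ex[N(t)] \leq 2^T (Ct/\sqrt{T})^n$, which is $o(1)$ for $t \leq c \sqrt{T} \cdot 2^{-T/n}$ with $c = c(\epsilon)$ small enough; Markov's inequality then gives $\Delta \geq c \sqrt{T} \cdot 2^{-T/n}$ with high probability.

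For the upper bound, I would fix $t \asymp \sqrt{T}\cdot 2^{-T/n}$ with a larger constant so that $\Ex[N(t)] = 2^{\Omega(T)}$ is large, and decompose the second moment by overlap $q = \ang{\bx,\bx'}/T \in [-1,1]$. For a pair at overlap $q$, the $n$ pairs of coordinates of $(\sum_i x_i \bv_i, \sum_i x'_i \bv_i)$ are i.i.d.\ centered bivariate Gaussians with unit marginals and correlation $q$, yielding
\begin{equation*}
\Pr[\bx,\bx'\text{ both good}] = \Pr[\abs{G_1},\abs{G_2} \leq t/\sqrt{T}]^n \leq \paren{\frac{2 t^2}{\pi T\sqrt{1-q^2}}}^n.
\end{equation*}
Summing against the binomial count of pairs at overlap $q$, a direct calculation shows that the contribution of $\abs{q}$ bounded away from $1$ is $(1+o(1)) \Ex[N(t)]^2$. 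Paley--Zygmund then gives $N(t) > 0$ with constant probability, and a separate concentration estimate (for instance Borell--TIS applied to $\Delta$ as a $1$-Lipschitz function of the ambient Gaussian vector $(\bv_1, \ldots, \bv_T) \in \RR^{nT}$) boosts this to high probability.

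The main obstacle is the near-$\pm 1$ overlap regime of the second moment, where the anti-concentration factor $(1-q^2)^{-1/2}$ blows up and a naive calculation fails. The standard remedy is a \emph{conditional} second moment: restrict $N(t)$ to signings that additionally satisfy a balance condition at intermediate checkpoints (e.g., that $\snorm{\sum_{i \leq s} x_i \bv_i}_\infty$ is controlled for a geometric sequence of $s$), which effectively eliminates near-identical pairs of signings. Alternatively, one could argue constructively via an online greedy/Karmarkar--Karp-style algorithm that processes the $\bv_i$ sequentially and rebalances to keep partial discrepancy at the target scale. For the uniform-$[-1,1]^n$ variant in the regime $n \leq T/\log T$, the same plan applies with single-signing small-ball probabilities estimated via an Esseen-type local central limit bound; the $\log T$ loss in the allowed range of $n$ reflects the heavier tails of the uniform distribution compared to the Gaussian, which is where the quantitative comparison degrades.
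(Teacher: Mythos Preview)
This theorem is quoted as background from \cite{TMR-2020-BalancingGaussianVectors}; the present paper does not supply a proof, so there is nothing here to compare against. Your plan is broadly the right one and matches the architecture of the cited work: first moment for the lower bound, second moment on $N(t)$ for the upper bound, with the overlap decomposition you describe.

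Two comments on the substance of your sketch. First, your small-ball bound for a correlated pair, $(2t^2/(\pi T\sqrt{1-q^2}))^n$, is only an \emph{upper} bound on the joint probability; as $|q|\to 1$ the true joint probability tends to the single-signing probability $(2\Phi(t/\sqrt{T})-1)^n$, not to infinity, so the ``blow-up'' is an artifact of this particular estimate. In the actual argument one handles the large-$|q|$ range by a different (trivial) bound on the joint probability and checks that the binomial entropy factor kills it; no checkpointing or conditional second moment is needed in the Gaussian case when $T\ge \epsilon n$. Second, your proposed boosting step via Borell--TIS is not quite right as stated: $\Delta(\bv_1,\dots,\bv_T)$ is indeed $1$-Lipschitz in the ambient Gaussian, but Gaussian concentration only gives fluctuations of order $O(1)$, which is far too coarse at the scale $\sqrt{T}\,2^{-T/n}$ relevant when $T\gg n$. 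The cited paper instead obtains high probability directly from a sharper second-moment analysis (showing $\Var N(t)=o((\Ex N(t))^2)$ rather than merely $O((\Ex N(t))^2)$), not by post-hoc concentration of $\Delta$.
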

\noindent
In words, if we think of the typical average-case discrepancy as a function of the number $T$ of vectors to be balanced, the discrepancy increases to a maximum order of $O(\sqrt{n})$ as $T$ increases to order $\Theta(n)$, just like the worst-case discrepancy per Spencer's theorem.
However, unlike the worst-case discrepancy, as $T$ grows beyond $\Theta(n)$, the discrepancy then \emph{decreases} because it becomes possible to achieve greater and greater cancellations with more and more vectors to balance.

In this setting, the following results show that online algorithms can achieve optimal or nearly optimal discrepancy for small $T = \Theta(n)$; however, these algorithms do not capture the decaying discrepancy once $T \gg n$.\footnote{One intuition for this is that the cancellations possible in a long stream of vectors involve vectors far apart in the stream, but an online algorithm cannot ``backtrack'' and change the signs of vectors it has seen already to achieve these cancellations.}
\begin{theorem}[Section 3 of \cite{BS-2020-OnlineBalancingRandom}, Section 2.1 of \cite{BJSS-2020-OnlineVectorBalancingDiscrepancy}]
    \label{thm:vector-online}
    The following hold:
    \begin{enumerate}
    \item If $\bv_1, \dots, \bv_T \sim \Unif(\{\pm 1\}^n)$ independently, then there is an online algorithm that, when run on these vectors, with high probability outputs a signing $x_1, \dots, x_T \in \{\pm 1\}$ with $\|\sum_{i = 1}^T x_i\bv_i \|_{\infty} = O(\sqrt{n})$.
    \item If $\bv_1, \dots, \bv_T \sim \mu^{\otimes n}$ (i.e., all having i.i.d.\ entries drawn from $\mu$) independently where $\mu$ is a probability measure supported on $[-1, 1]$ and having mean zero, then there is an online algorithm that, when run on these vectors, with high probability outputs a signing $x_1, \dots, x_T \in \{\pm 1\}$ with $\|\sum_{i = 1}^T x_i\bv_i \|_{\infty} = O(\sqrt{n} \log T)$.
    \end{enumerate}
\end{theorem}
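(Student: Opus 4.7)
My plan is to analyze the online hyperbolic cosine algorithm, which is a natural online counterpart of Spencer's partial coloring method. Let $\bw_t \colonequals \sum_{i=1}^t x_i \bv_i$ denote the running signed sum, fix a small parameter $\lambda > 0$ (to be chosen), and define the potential
\[
    \Phi_t \colonequals \sum_{j=1}^{n} \cosh\bigl(\lambda(\bw_t)_j\bigr).
\]
Upon seeing $\bv_t$, the algorithm picks $x_t \in \{\pm 1\}$ so as to minimize $\Phi_t$. Using $\cosh(a+b) = \cosh a \cosh b + \sinh a \sinh b$ together with the parities of $\cosh$ and $\sinh$, this greedy rule reduces to $x_t = -\sgn(S_t)$ where $S_t \colonequals \sum_{j} \sinh(\lambda(\bw_{t-1})_j)\sinh(\lambda(\bv_t)_j)$, after which
\[
    \Phi_t = \sum_j \cosh(\lambda(\bw_{t-1})_j)\cosh(\lambda(\bv_t)_j) \;-\; |S_t|.
\]

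The heart of the proof is a supermartingale-style drift analysis for $\Phi_t$ under conditional expectation on the fresh, independent vector $\bv_t$. Using $|(\bv_t)_j| \leq 1$, the first (``inflation'') term inflates $\Phi_{t-1}$ by at most a factor of $1+O(\lambda^2)$. For the ``deflation'' term $|S_t|$ in Part~1, the symmetry of $\Unif(\{\pm 1\})$ makes the $\sinh(\lambda(\bv_t)_j)$ i.i.d.\ and mean-zero, so Khintchine's inequality gives $\EE[|S_t|\mid \mathcal{F}_{t-1}] \gtrsim \sinh(\lambda)\bigl(\sum_j \sinh^2(\lambda(\bw_{t-1})_j)\bigr)^{1/2}$, and a Cauchy--Schwarz bound combined with the identity $\sinh^2 = \cosh^2 - 1$ yields $\EE[|S_t|\mid\mathcal{F}_{t-1}] \gtrsim \lambda\Phi_{t-1}/\sqrt{n}$ whenever $\Phi_{t-1} \gtrsim n$. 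Choosing $\lambda$ of order $1/\sqrt{n}$, the deflation dominates above a threshold of $\Phi \approx n$, so a standard stopped-martingale/Doob argument gives $\Phi_t = O(n)$ uniformly in $t \leq T$ with high probability. Inverting, this yields $\|\bw_t\|_\infty \leq \lambda^{-1}\log \Phi_t = O(\sqrt{n}\log n)$; to remove the spurious $\log n$ and achieve the stated $O(\sqrt{n})$ bound I would switch to the tighter moment-based potential $\sum_j (\bw_t)_j^{2p}$ with $p = \Theta(\log n)$, whose analysis directly controls $\|\bw_t\|_{2p} \approx \|\bw_t\|_\infty$ up to a constant while retaining the same Khintchine-style drift argument.

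Part~2 uses the same algorithm, but now without symmetry $\EE[\sinh(\lambda(\bv_t)_j)]$ need no longer vanish. Since $\mu$ has mean zero and is supported on $[-1,1]$, a third-order Taylor expansion gives $|\EE[\sinh(\lambda(\bv_t)_j)]| = O(\lambda^3)$, so the drift inherits a bias of size $O(\lambda^3 \Phi_{t-1})$ that must be absorbed into the Khintchine-based deflation $\lambda\Phi_{t-1}/\sqrt{n}$. Demanding that this bias remain dominated uniformly over $t \leq T$ in the martingale/union-bound stage forces one to shrink $\lambda$ by a factor of roughly $\log T$, which is what translates into the extra $\log T$ factor in the final bound $O(\sqrt{n}\log T)$.

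The hard part will be the drift lower bound $\EE[|S_t|\mid\mathcal{F}_{t-1}] \gtrsim \lambda\Phi_{t-1}/\sqrt{n}$: this is the single step that converts the probabilistic structure of the input (symmetry, boundedness, mean zero) into a concrete deterministic decrease of the potential, and it requires chaining three not-quite-routine ingredients---a Khintchine/Paley--Zygmund lower tail for a weighted sum of independent variables with the random weights $\sinh(\lambda(\bv_t)_j)$, the identity $\sinh^2 = \cosh^2 - 1$, and a Cauchy--Schwarz pass from $L^1$ to $L^2$ norms of the sinh vector. Once this drift estimate is in hand, the rest of the proof is a routine supermartingale concentration combined with inverting the cosh potential to recover $\|\bw_t\|_\infty$, and the adjustments needed for Part~2 and for the sharper $\sqrt{n}$ (rather than $\sqrt{n}\log n$) bound in Part~1 are largely bookkeeping.
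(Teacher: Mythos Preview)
This theorem is quoted from \cite{BS-2020-OnlineBalancingRandom,BJSS-2020-OnlineVectorBalancingDiscrepancy} rather than proved in the present paper; the paper's own Theorem~\ref{thm:mhc} is the matrix analog of exactly the cosh-potential drift argument you sketch, and your outline for Part~2 matches that template well. One correction, though: the $\log T$ does not come from the bias $\EE[\sinh(\lambda v_j)]=O(\lambda^3)$ forcing a smaller $\lambda$. That bias contributes at most $O(\lambda^3\Phi_{t-1})$ to the drift, which is already dominated by the Khintchine deflation $c\lambda\Phi_{t-1}/\sqrt n$ under the choice $\lambda\asymp n^{-1/2}$ you are making anyway---no further shrinking is needed. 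The logarithm instead enters at the final inversion step, where a Markov bound plus union bound over $t\le T$ controls $\Phi_t$ and one then reads off $\|\bw_t\|_\infty \le \lambda^{-1}\log\Phi_t$; compare the last paragraph of the proof of Theorem~\ref{thm:mhc}.

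The genuine gap is in Part~1. As the paper remarks in Section~\ref{sec:related}, the $O(\sqrt n)$ bound of \cite{BS-2020-OnlineBalancingRandom} is \emph{not} obtained from the cosh algorithm or any change of potential, but by combining the cosh algorithm with a separate majority-vote correction step that exploits the special $\Unif(\{\pm 1\}^n)$ structure. The cosh potential by itself yields at best $\|\bw_t\|_\infty \le \lambda^{-1}\log\Phi_t = O(\sqrt n\,\log n)$ even granting $\Phi_t=O(n)$ uniformly. Your proposed fix of switching to $\sum_j (\bw_t)_j^{2p}$ with $p=\Theta(\log n)$ is not how the cited result is proved, and it is not clear the drift analysis goes through: for $\cosh$ the argument works because $|\sinh|$ and $\cosh$ are pointwise comparable, making the Khintchine deflation directly proportional to $\Phi$ above a threshold, whereas for the polynomial potential the inflation involves $\binom{2p}{2}\sum_j w_j^{2p-2}$ and the deflation $2p\bigl(\sum_j w_j^{4p-2}\bigr)^{1/2}$, neither of which is cleanly controlled by $\Phi=\sum_j w_j^{2p}$. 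Removing the last logarithm is a genuine additional idea, not bookkeeping.
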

\noindent
We note that we refer here to a side discussion of \cite{BJSS-2020-OnlineVectorBalancingDiscrepancy}; their main result gives a weaker bound but allows for correlations in the entries of the $\bv_i$.
See Section~\ref{sec:related} for more discussion of the rather delicate dependence of these results on the distribution of the $\bv_i$.
See also the recent results of \cite{GKPX-2023-OGPDiscrepancy} for a demonstration that the performance of the algorithm in Part 1 of Theorem~\ref{thm:vector-online} above is optimal up to constants when $T = \Theta(n)$.

In this paper, we initiate the study of the questions discussed above for the related problem of \emph{matrix discrepancy}.
Let us review the parallel definitions and background.
For a collection of matrices $\bA_1, \dots, \bA_T \in \RR^{n \times n}_{\sym}$, we set (harmlessly overloading the $\Delta(\cdot)$ notation)
\begin{equation}
    \Delta(\bA_1, \dots, \bA_T) \colonequals \min_{\bx \in \{\pm 1\}^T} \left\|\sum_{i = 1}^T x_i \bA_i \right\|,
\end{equation}
where the norm is the operator norm.

In this case, the analog of Spencer's theorem (Theorem~\ref{thm:spencer}) remains only conjectural.
\begin{conjecture}[Matrix Spencer conjecture \cite{Meka-2014-DiscrepancyMatrixSpencer,Zouzias-2011-MatrixHyperbolicCosine}]
    \label{conj:mx-spencer}
    For all $\bA_1, \dots, \bA_T \in \RR^{n \times n}_{\sym}$ with $\|\bA_i\| \leq 1$,
    \begin{equation}
        \Delta(\bA_1, \dots, \bA_T) \lesssim \left\{ \begin{array}{ll} \sqrt{T \log(n/T)} & \text{if } T \leq n/2, \\ \sqrt{T} & \text{if } n/2 \leq T \leq n^2, \\ n & \text{if } T \geq n^2 \end{array} \right\}.
    \end{equation}
\end{conjecture}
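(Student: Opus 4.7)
Since \cref{conj:mx-spencer} is a well-known open problem, I can only sketch the natural attack that would generalize Spencer's own argument, and flag where I expect it to get stuck.

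The plan is to mimic Spencer's partial-coloring-via-entropy strategy. Recall the scalar template: given any $\bv_1,\dots,\bv_T\in\RR^n$ with $\|\bv_i\|_\infty\le 1$, one proves a \emph{partial coloring lemma} producing $\bx\in\{-1,0,+1\}^T$ with at least $T/2$ nonzero entries and $\|\sum_i x_i\bv_i\|_\infty\le C\sqrt{T\max\{1,\log(n/T)\}}$, then iterates $O(\log T)$ times on the surviving coordinates to obtain a full coloring matching \cref{thm:spencer}. The heart of the scalar partial coloring lemma is the Gaussian slab estimate
\[
\Pr_{\bg\sim\sN(\zero,\bI_T)}\!\paren{\bnorm{\textstyle\sum_i g_i\bv_i}_\infty\le C\sqrt{T\max\{1,\log(n/T)\}}}\ge e^{-T/10},
\]
combined with a convex-geometric pigeonhole (Giannopoulos) or with the Bansal--Lovett--Meka random walk, both of which convert this Gaussian mass on $\RR^T$ into the claimed cube-to-cube fractional coloring.

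For the matrix version, the first step I would attempt is the analog of the Gaussian estimate: that for $\bA_1,\dots,\bA_T\in\RR^{n\times n}_{\sym}$ with $\|\bA_i\|\le 1$,
\[
\Pr_{\bg\sim\sN(\zero,\bI_T)}\!\paren{\bnorm{\textstyle\sum_i g_i\bA_i}\le C\sqrt{T\max\{1,\log(n/T)\}}}\ge e^{-T/10}.
\]
In the regime $T\gtrsim n$ this is within reach via Tropp's matrix Khintchine inequality plus a Paley--Zygmund lower tail; in the regime $T\ll n$ the weaker classical bound $\|\sum g_i\bA_i\|\lesssim\sqrt{T\log n}$ suffices. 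I would then try to feed this estimate into a matrix-valued BLM walk: a Brownian motion on $[-1,1]^T$ constrained to the operator-norm slab $\{\bx:\|\sum x_i\bA_i\|\le\lambda\}$, and try to show by a martingale/covariance computation that $\Omega(T)$ coordinates hit $\pm 1$ before the walk escapes the slab.

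The principal obstacle is exactly this conversion. All extant partial-coloring lemmas exploit the \emph{product structure} of the slab $\{\bv:\|\bv\|_\infty\le\lambda\}$: its supporting hyperplanes are axis-aligned, so the BLM walk has only $n$ candidate active constraint directions at any time, and the relevant subgaussian width is $O(\sqrt{\log n})$ rather than $O(\sqrt{n})$. The operator-norm slab has no such product structure: a boundary matrix $\bM$ is certified by a supporting hyperplane $\{\bM':\ang{\bu\bu^\top,\bM'}=\lambda\}$ where $\bu$ is a top eigenvector of $\bM$ that moves continuously as the walk evolves. Controlling the drift of $\sum x_i\bA_i$ along its own unknown, time-varying top eigenvectors, without paying a $\sqrt{\log n}$ penalty that collapses the argument back to matrix Khintchine, is precisely what makes this conjecture hard; the known partial results (bounded-rank matrices, block-diagonal matrices, or bounded Schatten $4$-norm) succeed exactly because they a priori restrict the set of eigenvector directions that can arise, and I do not see how to bypass this restriction without a genuinely new idea.
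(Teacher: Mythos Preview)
Your assessment is correct: \cref{conj:mx-spencer} is stated in the paper as an open conjecture, not a result, and the paper does not attempt a proof. Immediately after stating it, the paper remarks that ``despite remarkable recent progress on special cases \dots\ the full version of Conjecture~\ref{conj:mx-spencer} appears difficult to attack and is perhaps one of the main open problems in discrepancy theory at the time of writing.'' So there is no proof in the paper to compare your proposal against.

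Your sketch of the Spencer/BLM partial-coloring template and your identification of the obstacle---that the operator-norm slab lacks the product structure whose axis-aligned constraints make the scalar argument work, so one must control drift along a moving top eigenvector---is an accurate summary of why the conjecture is hard and why known partial results succeed only under structural restrictions on the eigenvector directions. This is exactly the kind of discussion one would expect in lieu of a proof for an open problem; there is nothing to correct.
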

\noindent
As with Spencer's theorem, the various regimes of $T$ are somewhat delicate to handle correctly.
In the matrix setting, the iterated rounding argument cited in Remark~\ref{rem:Tggn} only applies to $T \gtrsim n^2$ (see the Introduction of \cite{DJR-2021-MatrixDiscrepancyMirrorDescent} for discussion of this point), and the first two regimes above give the main substance of the conjecture.

Note that, if $T = O(n)$, then Conjecture~\ref{conj:mx-spencer} contains Theorem~\ref{thm:spencer} as the special case where all $\bA_i$ commute and therefore are simultaneously diagonalizable.
In contrast, the non-commutative Khintchine inequality\footnote{A matrix concentration inequality, not to be confused with the other Khintchine inequalities discussed later in this paper, which are vector anti-concentration inequalities.} of Pisier and Lust-Piquard \cite{LPP-1991-NonCommutativeKhintchinePaley} only implies that choosing random signs $x_i \sim \Unif(\{\pm 1\})$ achieves
    \begin{equation}
        \label{eq:random-signs}
        \EE \left\|\sum_{s = 1}^T x_s \bA_s \right\|
        \lesssim \left\| \sum_{s = 1}^T \bA_s^2 \right\|^{1/2} \sqrt{\log n} \\
        \leq \sqrt{T\log n},
    \end{equation}
just as in the vector case.

Despite remarkable recent progress on special cases \cite{HRS-2021-MatrixDiscrepancyQuantumCommunication,DJR-2021-MatrixDiscrepancyMirrorDescent,BJM-2022-MatrixSpencerSparse}, the full version of Conjecture~\ref{conj:mx-spencer} appears difficult to attack and is perhaps one of the main open problems in discrepancy theory at the time of writing.
The matrix versions of the remaining results discussed above in either online or average-case settings, on the other hand, are more approachable.
Towards both probing Conjecture~\ref{conj:mx-spencer} and exploring the parallel problem of the discrepancy of random rather than worst-case $\bA_i$, our goal in this paper will be to demonstrate matrix-valued analogs of Theorems~\ref{thm:gaussian-vectors} and \ref{thm:vector-online}.

\subsection{Notation}
We use bold small and capital letters to denote vectors and matrices, respectively, and non-bold letters with subscripts to denote entries: $\ba \in \RR^n; \bA \in \RR^{n\times n}; a_i, A_{ij} \in \RR$.

For a matrix $\bA \in \RR^{n \times n}_{\sym}$, we write $\lambda_1(\bA) \geq \cdots \geq \lambda_n(\bA)$ for its ordered eigenvalues.
We denote the operator norm by $\|\bA\|$, the trace or nuclear norm by $\|\bA\|_{*} \colonequals \Tr(\sqrt{\bA\bA^{\top}})$, and the Frobenius norm by $\|\bA\|_{F} \colonequals \sqrt{\Tr(\bA\bA^{\top})}$.
For an additional $\bB \in \RR^{n \times n}_{\sym}$, we write $\langle \bA, \bB \rangle \colonequals \Tr(\bA \bB) = \sum_{i, j = 1}^n A_{ij}B_{ij}$.
We write $\row(\bA)$ for the row space (equivalently column space or image) of $\bA$.
We write $\symvec(\bA) \in \RR^{n(n + 1)/2}$ for the vector with entries $A_{ii}$ for $1 \leq i \leq n$ and $\sqrt{2}A_{ij}$ for $1 \leq i < j \leq n$, so that $\langle \bA, \bB \rangle = \langle \symvec(\bA), \symvec(\bB) \rangle$.
We write $\symmat(\ba) \in \RR^{n \times n}_{\sym}$ for $\ba \in \RR^{n(n + 1)/2}$ for the inverse transformation.
For a subspace $V \subset \RR^n$, we write $\bP_V$ for the orthogonal projection matrix to $V$.

The asymptotic notations $O(\cdot), o(\cdot), \Omega(\cdot), \omega(\cdot), \Theta(\cdot)$, $\lesssim$, $\gtrsim$, $\ll$, $\gg$ have their standard meanings, always in the limit $n \to \infty$.
Subscripts of these symbols indicate quantities that the implicit constants depend on.

\subsection{Main Results: Discrepancy Asymptotics}

We first consider asymptotic results on average-case matrix discrepancy.
We note that here we are discussing bounds on the ``true'' discrepancy value, not lower or upper bounds for particular classes of algorithms.
Such results are still useful for evaluating algorithms; we in part view our asymptotic results as justifying that our analysis of an online algorithm below is close to optimal.

Before giving concrete results, we state the following informal conjecture on the kind of lower bound that we expect for ``sufficiently nice'' sequences of i.i.d.\ random matrices.
We put in quotes the terms in the statement to which we are not giving a precise definition.

\begin{conjecture}[Informal discrepancy asymptotics]
    \label{conj:lb}
    Suppose that $\bA_1, \dots, \bA_T \in \RR^{n \times n}_{\sym}$ are i.i.d.\ random matrices for $T = T(n)$ with $\|\bA_i\| \leq 1$ almost surely, having a ``sufficiently nice'' distribution, and having ``effective rank'' $r = r(n)$.
    Then, with high probability,
    \begin{equation}
        \Delta(\bA_1, \dots, \bA_T) \asymp \sqrt{\frac{rT}{n}} 4^{-T / n^2}.
    \end{equation}
\end{conjecture}
\noindent
Here, we may interpret $4^{-T/n^2} = 2^{-T/(n^2/2)}$, where $n^2/2$ is approximately the dimension of $\RR^{n \times n}_{\sym}$.
With this interpretation, up to the factor of $\sqrt{r / n}$ accounting for the possibility of low-rank matrices, this prediction is identical to the result of \cite{TMR-2020-BalancingGaussianVectors} stated in our Theorem~\ref{thm:gaussian-vectors} for vector discrepancy.
See Section~\ref{sec:lb} for a heuristic justification of this conjecture through a first moment calculation.

For our rigorous results, we first consider the special case of i.i.d.\ matrices drawn from the \emph{Gaussian orthogonal ensemble (GOE)}, for which we can give a detailed analysis matching the above.

\begin{definition}[Gaussian orthogonal ensemble]
    \label{def:goe}
    We write $\GOE(n, \sigma^2)$ for the law of the $n \times n$ random symmetric matrix $\bW$ with independent entries on and above the diagonal distributed as $W_{ij} = W_{ji} \sim \sN(0, (1 + \One\{i = j\})\sigma^2)$.
\end{definition}
\noindent
For the sake of comparison with Conjecture~\ref{conj:lb}, we mention the following standard result of random matrix theory on the typical norm of GOE matrices.
\begin{proposition}[GOE norm]
    For a sequence $\sigma = \sigma(n) > 0$, if $\bW \sim \GOE(n, \sigma^2)$, then with high probability $\|\bW\| = (1 + o(1)) \cdot 2\sigma \sqrt{n}$.
\end{proposition}

\begin{theorem}[Asymptotics: GOE matrices]
    \label{thm:goe}
    Suppose $\bA_1, \dots, \bA_T \sim \GOE(n, \frac{1}{n})$ are i.i.d.\ for an increasing sequence $T = T(n)$ .
    Then, for absolute constants $0 < c < C$, with high probability as $n \to \infty$,
    \begin{equation}
        c \sqrt{T} 4^{-T/n^2} \leq \Delta(\bA_1, \dots, \bA_T) \leq Cn.
    \end{equation}
\end{theorem}
\noindent
When $T \sim n^2$, the bounds match up to constants and give $\Delta(\bA_1, \dots, \bA_T) \asymp n$ with high probability, so in this case our result correctly identifies the scaling of the typical discrepancy.
We think of this as a ``critical'' regime where the discrepancy is maximized, a natural analog to the regime $T \sim n$ for the vector case much studied as the ``symmetric binary perceptron'' in statistical physics (see Section~\ref{sec:related}).
This also shows that the matrix Spencer conjecture is tight in this regime even for the simple and natural example of i.i.d.\ random matrices.

We also give a general-purpose result that can be applied to broader classes of distributions in the same regime $T \sim n^2$.
\begin{theorem}[Asymptotics: general distributions]
    \label{thm:lb}
    Let $\mu_n$ be a probability measure on $\RR^{n \times n}_{\sym}$ having mean zero, and let $r = r(n) \in \NN$ have $r(n) \leq n$.
    Suppose that the following hold, where we let $\bA \sim \mu_n$:
    \begin{enumerate}
    \item $\|\bA\|_F^2 \in [\frac{1}{2}r, \frac{3}{2}r]$ with probability at least $1 - o(\frac{1}{n})$.
    \item There is a constant $c > 0$ not depending on $n$ such that, for any $\bY \in \RR^{n \times n}_{\sym}$ with $\|\bY\|_F = 1$, $\EE \exp(c \frac{n}{\sqrt{r}} |\langle \bA, \bY \rangle|) \leq 2$.
    \end{enumerate}
    Then, there exists an $\epsilon_0 > 0$ depending only on $c$ above such that, if $T = T(n) = \epsilon n^2$ with $\epsilon < \epsilon_0$ and  $\bA_1, \dots, \bA_T \sim \mu_n$ are i.i.d., then we have with high probability as $n \to \infty$
    \begin{equation}
        \Delta(\bA_1, \dots, \bA_T) \asymp_{\epsilon} \sqrt{rn}.
    \end{equation}
\end{theorem}
\noindent
The parameter $r$ should be thought of as an ``effective rank'' of $\bA \sim \mu_n$, which is compatible with Condition 1 above when $\|\bA\| = O(1)$ with high probability.

The following application uses this to treat a convenient random matrix distribution where the rank is a free parameter (unlike for the GOE), for which the conditions above are not difficult to verify.\footnote{Actually, Theorem~\ref{thm:lb} cannot quite be used in a black-box manner because the distribution involved here is not centered, but we will use the same proof \emph{mutatis mutandis}.}

\begin{definition}[Wishart ensemble]
    Let $r, n \in \NN$, $\sigma^2 > 0$, and $\bG \in \RR^{n \times r}$ be a random matrix with i.i.d.\ entries distributed as $\sN(0, \sigma^2)$.
    We write $\Wish(n, r, \sigma^2)$ for the law of $\bG\bG^{\top} \in \RR^{n \times n}_{\sym}$.
    Note that $\rank(\bW) = \min\{r, n\}$ when $\bW \sim \Wish(n, r, \sigma^2)$.
\end{definition}

\noindent
As for the GOE, it will be useful to register the following result about the norm of a Wishart matrix.
See, e.g., Theorem 7.3.1 and Corollary 7.3.3 of \cite{Vershynin-2018-HDP} for proofs.

\begin{proposition}[Wishart norm]
    \label{prop:wish-norm}
    For sequences $r = r(n) \in \NN$ and $\sigma = \sigma(n) > 0$, if $\bW \sim \Wish(n, r, \sigma^2)$, we have with high probability $\|\bW\| = (1 + o(1)) \sigma^2(\sqrt{n} + \sqrt{r})^2$.
\end{proposition}

\begin{corollary}[Asymptotics: Wishart matrices]
    \label{cor:lb-proj}
    Let $r: \NN \to \NN$ be such that $r(n) \leq n$ for all $n$ and $r(n) = o(n)$ as $n \to \infty$.
    Let $\mu_n = \Wish(n, r, \frac{1}{n})$.
    (Note that, by Proposition~\ref{prop:wish-norm}, this means that $\bW \sim \mu_n$ has $\|\bW\| = O(1)$ with high probability.)
    If $\bA_1, \dots, \bA_T \sim \mu_n$ are i.i.d.\ and $T = T(n) = \epsilon n^2$ for sufficiently small $\epsilon$, then, with high probability,
    \begin{equation}
        \Delta(\bA_1, \dots, \bA_T) \asymp_{\epsilon} \sqrt{rn}.
    \end{equation}
\end{corollary}

\subsection{Main Results: Matrix Hyperbolic Cosine Algorithm}

We next study an online algorithm for matrix discrepancy, which is a matrix-valued version of the algorithm achieving Part 2 of Theorem~\ref{thm:vector-online}.
The latter algorithm, which minimizes a potential function involving the hyperbolic cosine function, was introduced by \cite{Spencer-1977-BalancingGames} and studied in the average-case setting by \cite{BS-2020-OnlineBalancingRandom} (see also \cite[Lecture 4]{Spencer-1994-TenLecturesProbabilisticMethod}).

The simple algorithm of choosing random signs $x_1, \dots, x_T \sim \Unif(\{\pm 1\})$ is online, and we have seen in \eqref{eq:random-signs} that it achieves a discrepancy of $O(\sqrt{T\log n})$.
On the other hand, say for $\bA_i \sim \GOE(n, \frac{1}{n})$, we have seen that the true discrepancy is $O(n)$ even for $T \gg n^2$.
Our goal below will be to analyze a more sophisticated algorithm that matches this behavior; our analysis focuses on this regime, and will give inferior results for $T \ll n^2$.

The \emph{matrix hyperbolic cosine (MHC) algorithm} we study is given in Algorithm~\ref{alg:mhc}.
The only previous work we are aware of that has studied this algorithm before is \cite{Zouzias-2011-MatrixHyperbolicCosine}, whose results applied to our average-case setting are weaker than ours; see Section~\ref{sec:related} for details.
In short, this algorithm maintains a single $n \times n$ matrix $\bM$ of the current signed sum of the matrices it has been given, and, given a new matrix, chooses a sign to minimize the potential function $\Tr \cosh(\alpha \bM)$ for a suitable constant $\alpha > 0$.\footnote{Recall the basic definitions of transcendental matrix functions: $\exp(\bX) \colonequals \sum_{k \geq 0} \frac{1}{k!}\bX^k$, and $\cosh(\bX) \colonequals \frac{1}{2}(\exp(\bX) + \exp(-\bX))$. See Section~\ref{sec:linalg} for further details.}

\begin{algorithm}
\SetKwInOut{Input}{Input\,}
\SetKwInOut{Output}{Output\,}
\caption{Matrix hyperbolic cosine (MHC) algorithm for online discrepancy.}\label{alg:mhc}
\Input{Stream of $\bA_1, \bA_2, \ldots \in \RR^{n \times n}_{\sym}$ and $\alpha > 0$.}
\Output{Stream of $x_1, x_2, \ldots \in \{ \pm 1 \}$.} 
    $\bM \gets \bm 0$\;
    \For{$i \geq 1$}{
        \textbf{read} $\bA_i$\;
        $x_i \gets \argmin_{x \in \{\pm 1\}} \Tr \cosh(\alpha(\bM + x\bA_i))$\;
        $\bM \gets \bM + x_i \bA_i$\; 
        \textbf{yield} $x_i$\;
    }
\end{algorithm}

We next introduce two conditions on a random matrix distribution that together will suffice to establish upper bounds on the discrepancy that the MHC algorithm achieves.
The first is an \emph{anti-concentration inequality}, analogous to standard such conditions for random vectors (e.g., \cite{PS-1995-KhintchineInequalities,Krishnapur-2016-AntiConcentration}).

\begin{definition}[Matrix anti-concentration]
    Let $\mu$ be a probability measure on $\RR^{n \times n}_{\sym}$.
    We say that $\mu$ satisfies a \emph{matrix anti-concentration inequality} with \emph{constant} $\eta > 0$ if, for all $\bX \in \RR^{n \times n}_{\sym}$, we have
    \begin{equation}
        \Ex_{\bA \sim \mu}|\langle \bX, \bA \rangle| \geq \frac{\eta}{\sqrt{n}} \|\bX\|_*.
    \end{equation}
    When this is the case, we write that $\mu$ is $\MACI(\eta)$.
\end{definition}
\noindent
    We will only ever prove a MACI condition through a stronger Khintchine-like inequality (see Definition~\ref{def:kaci}).
    We state our main result in terms of this weaker condition to draw a parallel with \cite{BJSS-2020-OnlineVectorBalancingDiscrepancy} where such a condition was used in the vector case to prove weaker discrepancy results over a broader range of distributions.

The second condition is a \emph{quantitative isotropy} condition for the random row space of our random matrices.

\begin{definition}[Unbiasedness]
    Let $\mu_n$ be a probability measure on $\RR^{n \times n}_{\sym}$ for each $n \geq 1$.
    We say that this sequence is \emph{$\theta$-unbiased} with \emph{rank sequence} $r: \NN \to \NN$ if, for all $n \geq 1$,
    \begin{equation}
        \left\|\Ex_{\bA \sim \mu_n}\bP_{\row(\bA)}\right\| \leq \theta \frac{r(n)}{n}.
    \end{equation}
\end{definition}

\noindent
For intuition, first note that the condition is only non-trivial when $r \ll n$, so that $\bA$ is low-rank.
If $\row(\bA)$ is a uniformly random subspace of $\RR^n$ having dimension $r$, then by symmetry we have $\EE \bP_{\row(\bA)} = \frac{r}{n}\bm I_n$.
The unbiasedness condition therefore says that $\row(\bA)$ is quantitatively close to uniformly random.

These two conditions imply the following upper bound on the discrepancy achieved by the signs that Algorithm~\ref{alg:mhc} outputs.

\begin{theorem}[General analysis of Algorithm~\ref{alg:mhc}]
    \label{thm:mhc}
    Suppose $\mu_n$ is a probability measure on $\RR^{n \times n}_{\sym}$ for each $n \geq 1$ and $r: \NN \to \NN$.
    Suppose that:
    \begin{enumerate}
        \item $\mu_n$ is supported on matrices of operator norm at most 1 for each $n$,
        \item the sequence $\mu_n$ is $\theta$-unbiased with rank sequence $r(n)$ for a constant $\theta > 0$, and
        \item the law of $\frac{n}{\sqrt{r(n)}}\bA$ for $\bA \sim \mu_n$ is $\MACI(\eta)$ for each $n$ and a constant $\eta > 0$.
    \end{enumerate}
    Then, Algorithm~\ref{alg:mhc} run with $\alpha = \frac{1}{4\sqrt{2}} \frac{\eta}{\theta} \frac{1}{\sqrt{rn}}$ and a stream of $T$ i.i.d.\ random matrices drawn from $\mu_n$ outputs a stream of signs $x_1, \dots, x_T \in \{ \pm 1\}$ that achieves, with high probability as $n \to \infty$,
    \begin{equation}
        \max_{t = 1}^T\left\|\sum_{s = 1}^t x_s \bA_s \right\| \lesssim_{\eta, \theta} \sqrt{rn} \log T.
    \end{equation}
\end{theorem}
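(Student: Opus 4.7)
The plan is a potential-function argument. Define $\Phi_t := \Tr\cosh(\alpha \bM_t)$, where $\bM_t := \sum_{s \le t} x_s \bA_s$ is the running signed sum. The goal is to show $\EE[\Phi_t] = O(n)$ uniformly in $t \le T$. Since $e^{\alpha\|\bM_t\|} \le 2\Phi_t$ (as $\cosh x \ge e^{|x|}/2$), Markov's inequality and a union bound over $t \in [T]$ will convert this into a high-probability estimate $\|\bM_t\| \lesssim \alpha^{-1}\log\Phi_t \lesssim \sqrt{rn}\,\log(T + n)$, matching the claimed bound.

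The core step is a one-step drift inequality. Fix $\bM = \bM_{t-1}$, let $\bA \sim \mu_n$ be independent of $\bM$, and write $\Phi_\pm := \Tr\cosh(\alpha(\bM \pm \bA))$. The greedy choice of sign satisfies the algebraic identity $\min_{x \in \{\pm 1\}} \Tr\cosh(\alpha(\bM + x\bA)) = \tfrac{1}{2}(\Phi_+ + \Phi_-) - \tfrac{1}{2}|\Phi_+ - \Phi_-|$, so I analyze the two pieces separately. For the symmetric part, applying Golden-Thompson to each of the four exponentials comprising $\Phi_+ + \Phi_-$ gives $\tfrac{1}{2}(\Phi_+ + \Phi_-) \le \Tr[\cosh(\alpha\bM)\cosh(\alpha\bA)]$. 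Combining the elementary Taylor bound $\cosh(\alpha\bA) \preceq \bI + C_1\alpha^2\bA^2$ (valid for $\alpha\|\bA\| \le 1$), the estimate $\bA^2 \preceq \bP_{\row(\bA)}$ (using $\|\bA\| \le 1$), and the $\theta$-unbiasedness hypothesis $\EE\,\bP_{\row(\bA)} \preceq \theta(r/n)\,\bI$, this yields $\EE_\bA[\tfrac{1}{2}(\Phi_+ + \Phi_-)] \le (1 + C_1\theta\alpha^2 r/n)\,\Phi_t$.

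For the antisymmetric part, a matrix Taylor expansion using Duhamel's identity $De^{\bX}[\bA] = \int_0^1 e^{s\bX}\bA\, e^{(1-s)\bX}\,ds$ collapses, after taking traces and using cyclicity, to $\Phi_\pm = \Phi_t \pm \alpha\langle \bA, \sinh(\alpha\bM)\rangle + R_\pm$, where the even-order terms in the expansion cancel in the difference and leave $|\Phi_+ - \Phi_-| \ge 2\alpha|\langle\bA, \sinh(\alpha\bM)\rangle| - C_2\alpha^3\Tr[\cosh(\alpha\bM)\bA^2]$ once $\alpha \le 1$. The MACI hypothesis applied to $\bX = \sinh(\alpha\bM)$ (with scaling factor $n/\sqrt r$) gives $\EE_\bA|\langle\bA,\sinh(\alpha\bM)\rangle| \ge (\eta\sqrt r/n^{3/2})\|\sinh(\alpha\bM)\|_*$, while the elementary scalar inequality $|\sinh x| \ge \cosh x - 1$ yields $\|\sinh(\alpha\bM)\|_* \ge \Phi_t - n$. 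Assembling, and choosing $\alpha = \frac{1}{4\sqrt 2}\frac{\eta}{\theta}\frac{1}{\sqrt{rn}}$ as in the theorem, makes the linear gain strictly dominate the quadratic loss, producing a one-step recursion of the form $\EE_\bA[\Phi_{t+1}\mid \bM_t] \le (1-\beta)\Phi_t + \gamma n$ with $\beta,\gamma = \Theta(\alpha\sqrt r/n^{3/2}) = \Theta(n^{-2})$. Iterating from $\Phi_0 = n$ yields $\EE[\Phi_t] = O(n)$ for every $t \le T$, completing the argument via the conversion to high probability described above.

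The principal technical obstacle is controlling the third-order remainder $|R_+ - R_-|$ in the matrix Taylor expansion for $\Phi_+ - \Phi_-$. In the scalar analogue of the algorithm, the identity $\cosh(m \pm a) = \cosh m\cosh a \pm \sinh m\sinh a$ makes this step trivial, but in the matrix setting one must bound trace integrals of the form $\int_0^1 \Tr[\cosh(u\alpha\bM)\,\bA\,\sinh((1-u)\alpha\bM)\,\bA]\,du$ by $\Tr[\cosh(\alpha\bM)\bA^2]$ up to absolute constants, and the restriction $\alpha \lesssim 1/\sqrt{rn}$ will be used crucially to prevent the resulting error from overwhelming the linear gain.
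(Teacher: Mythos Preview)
Your proposal follows the same potential-function strategy as the paper and would work, but you have made the antisymmetric step harder than necessary. The paper avoids your ``principal technical obstacle'' entirely: rather than using the identity $\min(\Phi_+,\Phi_-) = \tfrac12(\Phi_+ + \Phi_-) - \tfrac12|\Phi_+ - \Phi_-|$ and then Taylor-expanding $\Tr\cosh(\alpha\bM \pm \alpha\bA)$ in $\bA$ via Duhamel, it applies Golden--Thompson once more to derive
\[
\Tr\cosh(\bX+\bY) \le \Tr[\cosh\bX\,\cosh\bY] + \Tr[\sinh\bX\,\sinh\bY]
\]
(Corollary~\ref{cor:cosh-subadditivity}). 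Since $\cosh$ is even and $\sinh$ is odd, this immediately gives
\[
\phi(\bM + x\bA) \le \langle\cosh(\alpha\bM),\cosh(\alpha\bA)\rangle + x\,\langle\sinh(\alpha\bM),\sinh(\alpha\bA)\rangle,
\]
so the greedy sign yields the gain term $-|\langle\sinh(\alpha\bM),\sinh(\alpha\bA)\rangle|$ for free. The crucial point is that $\bM$ and $\bA$ are now decoupled: to extract the MACI lower bound one only needs to linearize $\sinh(\alpha\bA)$, which is a function of $\bA$ alone, via the scalar estimate $|\sinh(x)-x|\le |x|^3$ for $|x|\le 1$ applied on eigenvalues. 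No mixed trace integrals of the form $\int_0^1 \Tr[\cosh(u\alpha\bM)\,\bA\,\sinh((1-u)\alpha\bM)\,\bA]\,du$ ever arise.

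A secondary difference: the paper does not derive a single contraction recursion $(1-\beta)\Phi_t + \gamma n$, but instead splits into two regimes (Claim~1: $\Phi\le 2n$ gives drift $\lesssim 1/n$; Claim~2: $\Phi\ge 2n$ gives drift $\le 0$), obtaining only $\EE\Phi_t \lesssim t/n + n \le T+n$. Your contraction form is slightly sharper (it gives $\EE\Phi_t = O(n)$ uniformly), but after taking logarithms both lead to the same $\sqrt{rn}\log(T+n)$ bound.
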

\noindent
We note that the logarithmic term in this bound is the same as in the general version of the vector bound of \cite{BJSS-2020-OnlineVectorBalancingDiscrepancy}, as stated in our Theorem~\ref{thm:vector-online}.
Also, the analysis of \cite{Zouzias-2011-MatrixHyperbolicCosine} would only give a bound of $\sqrt{T \log n}$, which is much larger than the above for large $T$ and indeed is just the same as the non-commutative Khintchine inequality implies for random signs; cf.\ the inequality stated in \eqref{eq:random-signs}.
Zouzias' purpose in analyzing the MHC algorithm was precisely to achieve the performance of random signs with a deterministic algorithm, while we are interested here in a sharper characterization of the algorithm's performance.

\begin{rem}[Comparison with vectorized algorithms]
    \label{rem:als}
    Another reasonable approach to solving matrix discrepancy problems is to vectorize the matrices and use vector discrepancy algorithms.
    This is especially powerful since some vector discrepancy algorithms guarantee that their output is a \emph{subgaussian} vector, and the subgaussianity of a vectorized matrix immediately implies a high-probability norm bound, as we derive in Proposition~\ref{prop:subgauss-mx} and use in our non-algorithmic upper bounds.
    To the best of our knowledge, the best such bound in the literature for an online algorithm is due to \cite{ALS-2021-DiscrepancySelfBalancingWalk}, whose bound in our setting would replace $\log(T)$ with $\log(nT)$, which is inferior in some regimes (see also our Remark~\ref{rem:mhc-improvements} for how our $\log(T)$ term can be improved to accurately treat the case $T \ll n$).
    More conceptually, we believe that our algorithm, being similar to that of \cite{BS-2020-OnlineBalancingRandom}, is more likely to be adaptable to sometimes remove this logarithmic factor altogether (which \cite{BS-2020-OnlineBalancingRandom} achieve, as stated in our Theorem~\ref{thm:vector-online}, for the special distribution of vectors drawn uniformly from $\{\pm 1\}^n$).
\end{rem}

We now give several applications of Theorem~\ref{thm:mhc} to different matrix distributions.
The proofs of these, given in Section~\ref{sec:mhc-proofs}, amount to verifying the unbiasedness and MACI properties for the underlying distributions.

\begin{corollary}[Independent entries]
    \label{cor:mhc-ind}
    For each $n \geq 1$, let $\widetilde{\mu}_n$ be a probability measure on $\RR^{n \times n}_{\sym}$ so that $\bA \sim \widetilde{\mu}_n$ has:
    \begin{enumerate}
    \item The entries of $\bA$ on and above the diagonal are independent,
    \item $\EE A_{ij} = 0$ for all $i, j \in [n]$,
    \item $C_1^2 / n \leq \EE A_{ij}^{2} \leq C_2^2 / n$ for all $n \geq 1$ and $i, j \in [n]$ for some $C_1, C_2 > 0$,
    \item $\EE A_{ij}^{4} / (\EE A_{ij}^{2})^2 \leq C_3$ for all $n \geq 1$ and $i, j \in [n]$ for some $C_3 > 0$, and
    \item $\|\bA\| \leq 1$ with high probability as $n \to \infty$.
    \end{enumerate}
    Let $\mu_n$ be $\widetilde{\mu}_n$ conditioned on the event that $\|\bA\| \leq 1$.
    Then, the conclusion of Theorem~\ref{thm:mhc} applies to these $\mu_n$ with $r(n) \colonequals n$.
    That is, Algorithm~\ref{alg:mhc} run with $\alpha = \gamma / n$ for a suitable constant $\gamma = \gamma(C_1, C_2, C_3) > 0$ and a stream $\bA_1, \dots, \bA_T$ drawn i.i.d.\ from $\mu_n$ outputs a stream of signs $x_1, \dots, x_T \in \{ \pm 1 \}$ that achieves, with high probability as $n \to \infty$,
    \begin{equation}
        \max_{t = 1}^T\left\|\sum_{s = 1}^t x_s \bA_s \right\| \, \lesssim_{C_1, C_2, C_3} \, n \log T.
    \end{equation}
\end{corollary}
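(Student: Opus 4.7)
}
The plan is to invoke Theorem~\ref{thm:mhc} with rank sequence $r(n) = n$. Since item (1) is immediate from the definition of $\mu_n$, the task reduces to verifying (i) the $\theta$-unbiasedness with rank sequence $r(n) = n$, and (ii) a matrix anti-concentration inequality for the rescaled law of $\sqrt{n}\,\bA$. The latter step is where essentially all of the work lies; the former is nearly free.

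For the unbiasedness step, note that when $r(n) = n$ the required bound becomes $\|\EE_{\bA \sim \mu_n} \bP_{\row(\bA)}\| \leq \theta$. Since $\bP_{\row(\bA)}$ is an orthogonal projection for every realization, the expected matrix is a convex combination of projections and thus has operator norm at most $1$. Hence the sequence is $1$-unbiased with rank sequence $r(n) = n$.

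The core of the proof is establishing $\MACI(\eta)$ for the law of $\sqrt{n}\,\bA$, i.e.\ showing that for every $\bX \in \RR^{n\times n}_{\sym}$,
\begin{equation*}
    \Ex_{\bA \sim \mu_n}|\langle \bX, \bA\rangle| \; \gtrsim_{C_1,C_2,C_3} \; \frac{1}{n}\|\bX\|_*.
\end{equation*}
Writing $Y \colonequals \langle \bX, \bA\rangle = \sum_i X_{ii}A_{ii} + 2\sum_{i<j} X_{ij}A_{ij}$ under $\widetilde{\mu}_n$, this is a sum of independent mean-zero random variables. A direct expansion using independence and assumptions (2)--(4) yields
\begin{equation*}
    \EE Y^2 \; \geq \; \frac{2C_1^2}{n}\|\bX\|_F^2, \qquad \EE Y^4 \; \leq \; 3(\EE Y^2)^2 + C_3\sum_k(\EE Z_k^2)^2,
\end{equation*}
where $Z_k$ are the individual summands. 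Bounding $\sum_k(\EE Z_k^2)^2 \leq \max_k \EE Z_k^2 \cdot \EE Y^2$ and using $\max_k \EE Z_k^2 \leq \tfrac{4C_2^2}{n}\|\bX\|_F^2$ gives a fourth-to-second-moment ratio bounded by a constant depending only on $C_1,C_2,C_3$. The Paley--Zygmund inequality then produces
\begin{equation*}
    \Ex_{\bA\sim\widetilde{\mu}_n}|\langle \bX,\bA\rangle| \; \gtrsim \; \sqrt{\EE Y^2} \; \gtrsim \; \frac{1}{\sqrt{n}}\|\bX\|_F \; \geq \; \frac{1}{n}\|\bX\|_*,
\end{equation*}
where the last step uses $\|\bX\|_* \leq \sqrt{n}\|\bX\|_F$. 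Multiplying by $\sqrt{n}$ confirms $\MACI(\eta)$ for $\sqrt{n}\,\bA$ under $\widetilde{\mu}_n$.

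To pass from $\widetilde\mu_n$ to the conditioned law $\mu_n$, I would use that $\Pr[\|\bA\| > 1] = o(1)$ by hypothesis (5) and invoke Cauchy--Schwarz to write $\EE_{\widetilde\mu_n}[|Y|\One\{\|\bA\|>1\}] \leq \sqrt{\EE Y^2}\cdot\sqrt{\Pr[\|\bA\|>1]} = o(1)\sqrt{\EE Y^2}$; combined with the Paley--Zygmund lower bound $\EE|Y| \gtrsim \sqrt{\EE Y^2}$, this shows $\EE_{\mu_n}|Y|$ differs from $\EE_{\widetilde\mu_n}|Y|$ by a multiplicative $1-o(1)$ factor, so the MACI constant is preserved up to a constant factor. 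With all three hypotheses of Theorem~\ref{thm:mhc} verified and $r(n)=n$, the theorem delivers the claimed $\lesssim n\log(T+n)$ discrepancy bound with high probability.

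The only step that really needs care is the MACI: the fourth-moment computation must be controlled uniformly in $\bX$, and the conditioning on $\{\|\bA\|\leq 1\}$ must be shown to preserve the anti-concentration estimate. Everything else (operator-norm boundedness, unbiasedness, and the final invocation of Theorem~\ref{thm:mhc}) is essentially bookkeeping.
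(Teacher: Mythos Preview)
Your proposal is correct and follows essentially the same route as the paper: verify the three hypotheses of Theorem~\ref{thm:mhc} with $r(n)=n$, with the only substantive work being the anti-concentration estimate, and handle the conditioning on $\{\|\bA\|\le 1\}$ via Cauchy--Schwarz (this is exactly the content of the paper's Proposition~\ref{prop:kaci-trunc}).

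The one minor methodological difference is in how the anti-concentration for $\widetilde\mu_n$ is obtained. The paper packages this as a $\KACI$ statement (Lemma~\ref{lem:kaci-hc}) proved by quoting hypercontractivity results from \cite{GOWZ-2010-FunctionsHalfspacesProductDistributions}, then passes to $\MACI$ via Proposition~\ref{prop:kaci-maci}. You instead compute the fourth-to-second moment ratio of $Y=\langle\bX,\bA\rangle$ directly and apply Paley--Zygmund. These are equivalent in spirit (the hypercontractivity route ultimately yields the same moment comparison), and your version is arguably more self-contained here. One small point to tidy: your claimed constant $2C_1^2/n$ in the lower bound on $\EE Y^2$ should be $C_1^2/n$ (the diagonal terms only contribute $X_{ii}^2\EE A_{ii}^2$, not $2X_{ii}^2\EE A_{ii}^2$), but this does not affect the argument.
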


The statement in Corollary~\ref{cor:mhc-ind} applies in particular to any reasonable model of a Wigner random matrix, such as the GOE, symmetric matrices with i.i.d.\ Rademacher entries, and so forth, so long as these matrices are rescaled to make the typical operator norm less than~1.
Moreover, we allow for entries with different distributions so long as they satisfy the regularity conditions uniformly, so the statement also applies to, e.g., symmetric Gaussian random matrices with different variance profiles (see \cite{BvH-2016-RandomMatrixIndependentEntries,vH-2017-StructuredRandomMatrices} for some discussion of the challenges of working with such matrices beyond classical models with i.i.d.\ entries).
Comparing with Theorem~\ref{thm:goe} above reveals a similar phenomenon to the vector case: up to logarithmic terms, online algorithms admit an average-case bound matching the true average-case discrepancy for $T = \Theta(n^2)$, the scaling of $T$ for which the true discrepancy is largest.

We also give the following application, where $\bA_1, \dots, \bA_T$ are a normalization of Gaussian Wishart matrices and the rank of $\bA_i$ may be any sequence, illustrating the non-trivial rank dependence in Theorem~\ref{thm:mhc}.

\begin{corollary}[Normalized Wishart matrices]
    \label{cor:wishart}
    Let $r: \NN \to \NN$ be a non-decreasing function.
    We abbreviate $r \colonequals r(n)$.
    For each $n \in \NN$, let $\mu_n$ be the law of $\bW / \|\bW\|$ when $\bW \sim \Wish(n, r, 1)$.
    Then, the conclusion of Theorem~\ref{thm:mhc} applies to these $\mu_n$.
    That is, Algorithm~\ref{alg:mhc} runs with $\alpha = \gamma / \sqrt{rn}$ for a constant $\gamma > 0$ and a stream of $T$ i.i.d.\  $\bA_i \sim \mu_n$ outputs a stream of signs $x_1, \dots, x_T \in \{ \pm 1 \}$ that achieves, with high probability as $n \to \infty$,
    \begin{equation}
        \max_{t = 1}^T\left\|\sum_{s = 1}^t x_s \bA_s \right\| \lesssim \sqrt{rn} \log T.
    \end{equation}
\end{corollary}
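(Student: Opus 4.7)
The plan is to verify the three hypotheses of Theorem~\ref{thm:mhc} for $\mu_n$ and then invoke that theorem. The operator norm condition holds trivially since $\|\bA\| = 1$ almost surely by construction. For the unbiasedness condition, I would observe that $\row(\bA) = \row(\bG\bG^\top)$ equals the column space of $\bG$. Assuming $r \leq n$ (the only regime where the target bound $\sqrt{rn}\log(T+n)$ is non-trivial), orthogonal invariance of Gaussian $\bG$ makes this column space Haar-distributed on the Grassmannian of $r$-planes in $\RR^n$. By rotational symmetry and a trace computation, $\EE \bP_{\row(\bA)} = (r/n)\bm I_n$, giving $\theta$-unbiasedness with $\theta = 1$.

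The main work is verifying MACI for $(n/\sqrt{r})\bA$. Writing $\bG = [g_1, \dots, g_r]$ columnwise with $g_k \sim \sN(\zero, \bm I_n)$ i.i.d., I would compute $\langle \bX, \bG\bG^\top \rangle = \sum_{k=1}^r g_k^\top \bX g_k \eqqcolon Z$, a degree-$2$ Gaussian chaos. A direct calculation gives $\EE Z^2 = r^2(\Tr \bX)^2 + 2r\|\bX\|_F^2$, and Gaussian hypercontractivity (a special case of the Khintchine--Kahane inequality for polynomials of bounded degree) yields $\EE |Z| \gtrsim \sqrt{\EE Z^2} \gtrsim \sqrt{r}\,\|\bX\|_F \gtrsim \sqrt{r/n}\,\|\bX\|_*$, where the last step uses $\|\bX\|_F \geq \|\bX\|_*/\sqrt{n}$ from Cauchy--Schwarz on the at most $n$ singular values of $\bX$.

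To pass from $\EE|Z|$ to $\EE|\langle \bX, \bA\rangle| = \EE[|Z|/\|\bG\bG^\top\|]$, I would use the standard non-asymptotic concentration $\|\bG\bG^\top\| \leq C n$ with probability $1 - e^{-\Omega(n)}$ to restrict to that event. A Cauchy--Schwarz bound on the complement (pairing $\sqrt{\EE Z^2}$ with the exponentially small tail probability) shows that truncation loses only a negligible additive term, yielding $\EE |\langle \bX, \bA\rangle| \gtrsim (1/n)\sqrt{r/n}\,\|\bX\|_*$. Scaling by $n/\sqrt{r}$ produces exactly $\MACI(\eta)$ for the rescaled distribution with a universal constant $\eta > 0$.

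The main obstacle is the coupling between the numerator $Z$ and the random normalization $\|\bG\bG^\top\|$ in the MACI step: they are not independent (both depend on $\bG$), so a clean factorization is unavailable and one must combine hypercontractivity for $|Z|$ with tail concentration for $\|\bG\bG^\top\|$ via truncation. Once all three hypotheses are verified, Theorem~\ref{thm:mhc} with $\alpha = \gamma/\sqrt{rn}$ for $\gamma = \eta/(4\sqrt{2}\theta)$ yields the claimed discrepancy bound directly.
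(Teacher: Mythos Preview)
Your proposal is correct and follows essentially the same route as the paper. The paper packages the MACI verification slightly differently---it abstracts the hypercontractivity/anti-concentration step into a general KACI property for vectors of degree-$d$ Gaussian monomials (Proposition~\ref{prop:poly-kaci}) and the Cauchy--Schwarz truncation step into a general lemma on KACI under conditioning (Proposition~\ref{prop:kaci-trunc}), then invokes both---but the underlying argument (lower-bound $\EE|Z|$ via hypercontractivity, pass to the normalized matrix by restricting to $\|\bG\|^2 \lesssim n$ and controlling the complement by Cauchy--Schwarz) is identical to yours.
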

\noindent
We note that this means of normalizing the $\bA_i$ to have norm 1 is natural, for example when $r = 1$ giving $\bA = \bu\bu^{\top}$ for $\bu$ uniformly random on the unit sphere, but the same result holds for any reasonable way of carrying out a normalization or truncation, by a straightforward application of our Proposition~\ref{prop:kaci-trunc}.
Because of this normalization this result deals with a slightly different distribution than Corollary~\ref{cor:lb-proj} above, but that result is straightforward to extend to allow for the same normalization as well.

We do not pursue analogous results for Wishart matrices with non-Gaussian underlying distributions here, but in principle these should be possible to obtain so long as one can prove corresponding anti-concentration inequalities for quadratic forms.
Results for some distributions, albeit a class not admitting a convenient description, would follow from Theorem~1.2 of \cite{Lovett-2010-GaussianPolynomialAntiConcentration}, for example.

\begin{rem}[Anti-concentration technicalities]
    Our reliance on the anti-concentration property introduces some nuances in the applicability of Theorem~\ref{thm:mhc}.
    For example, since the Theorem applies to matrices distributed as $\bu\bu^{\top}$ for $\bu \sim \Unif(\SS^{n - 1})$, one might also hope that the same would hold for $\bu \sim \tfrac{1}{\sqrt{n}}\Unif(\{\pm 1\}^n)$.
    However, this distribution does not satisfy a $\MACI(\eta)$ property for any $\eta > 0$, since its diagonal is constant.
    We expect that the result of Corollary~\ref{cor:wishart} should still hold for this distribution, but we will not attempt to address this technical difficulty here.
\end{rem}

\subsection{Related Work}
\label{sec:related}

\paragraph{Worst-case discrepancy}

The setting of Spencer's theorem (our Theorem~\ref{thm:spencer}) where $\bv_i \in [-1, 1]^n$ is only one of several worst-case settings---where we seek general bounds on the discrepancy of a sequence of vectors subject to some constraints---of interest in vector discrepancy theory.
Other well-studied restrictions include the cases of sparse $\bv_i \in \{0, 1\}^n$, corresponding to the \emph{Beck-Fiala conjecture}, and of $\|\bv_i\|_2 \leq 1$, corresponding to the \emph{\Komlos\ conjecture} (see the discussion in the general references \cite{Matousek-1999-GeometricDiscrepancy,Chazelle-2000-DiscrepancyMethod}).

One may consider the matrix versions of these constraints, which should be applied to the matrix eigenvalues.
Thus the ``matrix Beck-Fiala problem'' concerns the discrepancy of matrices with eigenvalues in $\{0, 1\}$ and with few non-zero eigenvalues, i.e., low-rank projection matrices as in our Corollary~\ref{cor:lb-proj}, while the ``matrix \Komlos\ problem'' concerns the discrepancy of matrices of bounded Frobenius norm.
For the former problem, to the best of our knowledge only the rank one case has been studied extensively, in large part due to its connection with the Kadison-Singer problem \cite{MSS-2014-RamanujanGraphsKadisonSinger,MSS-2015-InterlacingFamiliesKadisonSinger,KLS-2020-FourDeviationsRankOne}.
For the latter problem, \cite{DJR-2021-MatrixDiscrepancyMirrorDescent} showed that the natural matrix-valued generalization of the vector \Komlos\ conjecture does not hold; see also our discussion around Corollary~\ref{cor:ub-general}.

\paragraph{Average-case discrepancy}
The result of Theorem~\ref{thm:gaussian-vectors} holds for random vectors with i.i.d.\ entries subject to mild regularity conditions on the underlying distribution (Remark 2 of \cite{TMR-2020-BalancingGaussianVectors}).
Partial results on the Gaussian case preceding those of \cite{TMR-2020-BalancingGaussianVectors} include those of \cite{KKLO-1986-AverageCaseNumberPartitioning,Costello-2009-BalancingGaussianVectors}.

One important case to which these analyses does not apply is that of \emph{discrete} distributions, such as i.i.d.\ vectors $\bv_i \sim \Unif(\{\pm 1\}^n)$, which happens to also be the setting to which the strongest online algorithm proposed by \cite{BS-2020-OnlineBalancingRandom}---not the vector version of the MHC algorithm but an enhancement thereof---applies. The key issue here is that, for sufficiently large $T$, it will be possible to sign such vectors so that $\|\sum_{i = 1}^T x_i\bv_i\| \in \{0, 1\}$, so that the discrepancy is as small as possible subject to a parity constraint.
The threshold of $T$ for which this happens was established by \cite{ANW-2021-DiscrepancyRandomRectangular}; see the discussions in \cite{ANW-2021-DiscrepancyRandomRectangular,GKPX-2023-OGPDiscrepancy} for a review of the literature.

In the special scaling $T \sim n$, much more detailed information is known characterizing the typical discrepancy as a function of the ``aspect ratio'' $\alpha \colonequals \lim_{n \to \infty} T / n$.
This setting, up to changes of notation, goes by the name of the \emph{symmetric binary perceptron} model; see, e.g., \cite{APZ-2019-CapacitySymmetricBinaryPerceptron,PX-2021-IsingPerceptron1RSB,ALS-2022-ContiguityLimitSymmetricPerceptron,Altschuler-2022-CriticalWindowSymmetricPerceptron,SS-2023-SymmetricPerceptronThreshold}.

A few works have considered, rather than a fully average-case setting of i.i.d.\ random inputs, a \emph{smoothed analysis} of discrepancy problems where a small random perturbation is applied to the inputs.
As shown for different discrepancy problems by \cite{BJMSS-2021-PrefixDiscrepancySmoothed,BJMSS-2022-SmoothedAnalysisKomlos}, this perturbation can allow improvements on the best known discrepancy bounds.

To the best of our knowledge, there are no published results on average-case matrix discrepancy.
We note, however, that while this paper was being prepared we learned in private communications of parallel work by Afonso Bandeira and Antoine Maillard, performing calculations similar to those of Corollary~\ref{thm:goe} more precisely for the regime $T \sim n^2$ (the matrix-valued setting analogous to the symmetric binary perceptron model).

\paragraph{Online algorithms for discrepancy}
Online algorithms for discrepancy, including the vector version of the MHC algorithm, were first considered in a game-theoretic setting by \cite{Spencer-1977-BalancingGames} (here, the first player sends vectors to the second, who must choose signs for them once they are received; thus, the first player generates an adversarial instance of a discrepancy problem, while the second executes an online algorithm).
The results of \cite{BS-2020-OnlineBalancingRandom} as cited in Theorem~\ref{thm:vector-online} analyze this algorithm and a variation thereof in the average-case setting.
More specifically, the algorithm achieving Part 1 of Theorem~\ref{thm:vector-online} combines the vector version of the MHC algorithm with an adjustment involving taking a majority vote over vector coordinates.
We have not found a straightforward matrix version of this that would allow us to improve on the MHC algorithm.
The analysis of \cite{BJSS-2020-OnlineVectorBalancingDiscrepancy}, which yields Part 2 of Theorem~\ref{thm:vector-online}, allows much more general vector distributions but incurs the cost of $\log(T)$ in the discrepancy bound, as in our Theorem~\ref{thm:mhc}.
We leave to future work the interesting open question of whether these factors may be removed in the matrix case by either a sharper analysis of the MHC algorithm or a ``hybrid'' algorithm like that of \cite{BS-2020-OnlineBalancingRandom}.

The work of \cite{ALS-2021-DiscrepancySelfBalancingWalk} we have mentioned above proposed an alternative algorithmic idea for the \emph{oblivious} adversarial setting, where the adversarial online vector inputs do not depend on the state (and, in particular, the previous sign choices) of the algorithm.
They defined a way of choosing random signs $x_t$ that yields a discrepancy bound that is the same as \cite{BS-2020-OnlineBalancingRandom,BJSS-2020-OnlineVectorBalancingDiscrepancy} except for a logarithmic factor of $\log(nT)$.
As we discuss in Remark~\ref{rem:als}, because their argument shows subgaussianity of the partial sums generated by the algorithm, their approach can be used for matrix discrepancy, but the MHC algorithm gives a slightly better bound.
The related algorithm of \cite{LSS-2021-GaussianFixedPointWalk}, treats a different but similar problem in a similar way.
Finally, \cite{KRR-2023-OnlineDiscrepancyMinimization} shows that an optimal bound of $O(\sqrt{\log T})$ can be achieved against an oblivious adversary (again via subgaussianity of the partial sums), but only by an inefficient online algorithm.
It remains an open problem whether the same can be achieved in polynomial time.

The only prior work we are aware of that explicitly attempts to extend these ideas to the matrix case is that of \cite{Zouzias-2011-MatrixHyperbolicCosine}, which describes Algorithm~\ref{alg:mhc} and provides a general analysis.
However, the application of their main result (Theorem 4 of \cite{Zouzias-2011-MatrixHyperbolicCosine}) in our setting is weaker; in the setting of Theorem~\ref{thm:mhc}, that analysis would only show that Algorithm~\ref{alg:mhc} produces a signing with
\begin{equation}
    \left\|\sum_{t = 1}^T x_t \bA_t \right\| \lesssim \sqrt{T \log n},
\end{equation}
the same as achieved by random signs per \eqref{eq:random-signs}.
In particular, once $T \gg n$ this analysis is severely suboptimal.
We also remark that a similar potential function on matrices was used, albeit only as a proof technique, by \cite{SV-2013-CovarianceEstimation2Moments}.

\paragraph{Handling non-commutativity}
Our approach to the analysis of the MHC algorithm is similar in spirit to a long line of work on matrix concentration inequalities. The analysis of the corresponding vector algorithm by \cite{BS-2020-OnlineBalancingRandom} may be viewed as treating the case of $\bA_1, \dots, \bA_T$ commutative matrices, and, ultimately, we must confront the issue of non-commutativity in the form of the failure of the identity $\exp(\bX + \bY) \neq \exp(\bX)\exp(\bY)$ for matrices.
As in the foundational work of \cite{AW-2002-IdentificationQuantumChannels} (see also the survey \cite{Tropp-2015-MatrixConc}) on matrix concentration, this may be circumvented by using the \emph{Golden-Thompson inequality} to relate these two quantities.

\section{Discrepancy Asymptotics}

We first introduce our proof techniques for upper and lower bounds on matrix discrepancy.

\subsection{Lower Bounds: First Moment Method and Scaling Heuristic}
\label{sec:lb}

In the first moment method, we will consider the number of signings that achieve a particular discrepancy $\delta$.
However, it will also be useful to transform the matrices $\bA_i$ involved into random matrices of mean zero (what is called ``symmetrization'' in similar situations in probability theory, not to be confused with symmetrization of asymmetric matrices), for which we introduce i.i.d.\ $y_1, \dots, y_T \sim \Unif(\{ \pm 1\})$.
We then have
\begin{align*}
    N_{\delta}
    &\colonequals \#\left\{ \bx \in \{\pm 1\}^T: \left\|\sum_{i = 1}^T x_i \bA_i \right\| \leq \delta \right\} \\
    &= \#\left\{ \bx \in \{\pm 1\}^T: \left\|\sum_{i = 1}^T x_i y_i\bA_i \right\| \leq \delta \right\} \\
    &= \sum_{\bx \in \{ \pm 1\}^T} \One\left\{\left\|\sum_{i = 1}^T x_i y_i\bA_i \right\| \leq \delta \right\}. \numberthis
\end{align*}
Taking expectations, we then find
\begin{equation}
    \EE N_{\delta} = 2^T \PP\left[\left\|\sum_{i = 1}^T y_i\bA_i \right\| \leq \delta\right]. \label{eq:1mm}
\end{equation}
If $\EE N_{\delta} \to 0$ for some $\delta = \delta(n)$ as $n \to \infty$, then by Markov's inequality we will have that $\Delta(\bA_1, \dots, \bA_T) \geq \delta$ with high probability.

Let us give a heuristic calculation to establish what scaling we expect for the smallest $\delta$ for which the above argument works, substantiating Conjecture~\ref{conj:lb}.
Recall that the $y_i\bA_i$ are i.i.d.\ centered random matrices of norm at most 1 and rank $r$.
We expect that sums of roughly $n / r$ of the $y_i\bA_i$ should be full-rank matrices, still of norm at most 1.
Thus, view our sum as
\begin{equation}
    \sum_{i = 1}^T y_i\bA_i = \sum_{j = 1}^{rT / n} \bB_j,
\end{equation}
where the $\bB_j$ are again i.i.d.\ centered random matrices of norm $O(1)$ and full rank, each distributed as the sum of $n / r$ independent copies of $y_1\bA_1$. Now, by the free central limit theorem (see, e.g., \cite{VDN-1992-FreeRandomVariables}), we expect the eigenvalues of such a matrix to be close to those of $\sqrt{rT / n} \, \bW$ for $\bW \sim \GOE(n)$.
Thus, the probability we are interested in should be approximated by a large deviation probability for the GOE,
\begin{align*}
    \PP\left[\left\|\sum_{i = 1}^T s_i\bA_i \right\|
    \leq \delta\right]
    &\approx \Px_{\bW \sim \mathrm{GOE}}\left[\|\bW\| \leq \sqrt{\frac{n}{rT}}\delta \right]
    \intertext{and for this kind of probability we will develop an estimate below (Corollary~\ref{cor:goe-small-ball}) based on the explicit formula for the density of the GOE eigenvalues, which will give that the leading order behavior is}
    &\approx \left(C\sqrt{\frac{n}{rT}}\delta\right)^{n^2/2} \numberthis
\end{align*}
for some $C > 0$. We find that the first moment is
\begin{equation}
    \EE N_{\delta} \approx 2^T\left(C\sqrt{\frac{n}{rT}}\delta\right)^{n^2/2},
\end{equation}
and this would give that $N_{\delta} = 0$ with high probability for any $\delta \leq c\sqrt{\frac{rT}{n}} 4^{-T/n^2}$ for $c < C^{-1}$.
So, we expect a high probability lower bound of the form
\begin{equation}
    \Delta(\bA_1, \dots, \bA_T) \gtrsim \sqrt{\frac{rT}{n}} 4^{-T/n^2},
\end{equation}
as we claimed in Conjecture~\ref{conj:lb}.

\subsection{Lower Bounds: Gramian Spectral Method}

We will also find useful a different approach to lower bounds, which amounts to treating the matrices involved as vectors and comparing with an $\ell^2$ vector discrepancy problem.
This gives a spectral lower bound on $\Delta(\bA_1, \dots, \bA_T)$ involving the Gram matrix of the $\bA_i$ under the Frobenius or trace inner product.

\begin{lemma}[Spectral discrepancy lower bound]
    \label{lem:spectral-lb}
    Let $\bA_1, \dots, \bA_T \in \RR^{n \times n}_{\sym}$.
    Let $\bM \in \RR^{T \times T}_{\sym}$ have $M_{ij} \colonequals \langle \bA_i, \bA_j \rangle$.
    Then,
    \begin{equation}
        \Delta(\bA_1, \dots, \bA_T) \geq \sqrt{\frac{T}{n} \lambda_{\min}(\bM)}.
    \end{equation}
\end{lemma}
\begin{proof}
    Suppose $\bx \in \{\pm 1\}^T$.
    We then have
    \begin{align*}
        \left\|\sum_{i = 1}^T x_i \bA_i \right\|^2
        &\geq \frac{1}{n}\left\|\sum_{i = 1}^T x_i \bA_i \right\|_F^2 \\
        &= \frac{1}{n}\sum_{i, j = 1}^T x_i x_j \langle \bA_i, \bA_j \rangle \\
        &= \frac{1}{n}\bx^{\top} \bM \bx \\
        &\geq \frac{1}{n} \|\bx\|^2 \lambda_{\min}(\bM) \\
        &= \frac{T}{n}\lambda_{\min}(\bM),
    \end{align*}
    and taking square roots on either side completes the proof.
\end{proof}

\subsection{Upper Bounds: Vectorization}

Our approach to asymptotic discrepancy upper bounds will ultimately take a similar form to the above spectral lower bound, but will be achieved by a quite different analysis.

\begin{definition}[Subgaussian vector]
    A random vector $\bv \in \RR^n$ is \emph{$\sigma^2$-subgaussian} if, for all $\by \in \RR^n$,  $\EE \exp(\langle \bv, \by \rangle) \leq \exp(\frac{1}{2}\sigma^2\|\by\|^2)$.
\end{definition}
\noindent
One way to view the definition is that $\bv$ is ``subgaussian in all directions,'' in the sense that $\langle \bv, \by \rangle$ is $\sigma^2$-subgaussian for all unit vectors $\by$.

\begin{proposition}[Theorem 1 of \cite{BDGL-2018-GramSchmidtWalk}]
    \label{prop:gsw}
    Let $\bv_1, \dots, \bv_T \in \RR^n$ have $\|\bv_i\| \leq 1$.
    Then, there are random $x_1, \dots, x_T \in \{\pm 1\}^n$ such that, for each $t \in [T]$, $\sum_{s = 1}^t x_s \bv_s$ is $\sigma^2$-subgaussian for an absolute constant $\sigma^2$ (one may take $\sigma^2 = 40$).
    Moreover, these $x_i$ may be computed in polynomial time by an offline algorithm.
\end{proposition}
\noindent
Some related work is worth mentioning.
First, the analysis of \cite{HSSZ-2019-BalancingCovariatesGramSchmidt} sharpened the subgaussianity constant to $\sigma^2 = 1$ (which, as they also show, is the best possible).
Next, \cite{KRR-2023-OnlineDiscrepancyMinimization} show that this subgaussianity can be achieved by an online algorithm, but this algorithm does not run in polynomial time.
A polynomial-time (and surprisingly simple) algorithm has been proposed by \cite{ALS-2021-DiscrepancySelfBalancingWalk}, which only gives $O(\log(nT  / \delta)^2)$-subgaussianity, on an event of probability $1 - \delta$ for some $\delta > 0$ (see also the related algorithm of \cite{LSS-2021-GaussianFixedPointWalk}, which treats a different but discrepancy-like problem).
Having the best of all worlds---a polynomial-time online algorithm producing an $O(1)$-subgaussian output---remains an open problem.

Perhaps surprisingly, this vector-valued result is already enough to derive powerful bounds on matrix discrepancy, because a standard technique allows us to control the operator norm of a random matrix whose vectorization is subgaussian.

\begin{proposition}
    \label{prop:subgauss-mx}
    Suppose that $\bM \in \RR^{n \times n}_{\sym}$ is a random matrix such that $\symvec(\bM)$ is $\sigma^2$-subgaussian.
    Then,
    \begin{equation}
        \PP[\|\bM\| \geq 4\sigma \sqrt{n} ] \leq \exp(-n).
    \end{equation}
\end{proposition}
\begin{proof}
    By the assumption, for any $\theta > 0$ and $\by \in \RR^n$ with $\|\by\| = 1$,
\begin{align*}
  \mathbb{P}[\by^{\top}\bM \by \geq t]
  &\leq \frac{\mathbb{E} \exp(\langle \theta \by\by^{\top}, \bM \rangle)}{\exp(\theta t)} \\
  &\leq \exp\left(\frac{\sigma^2}{2} \|\theta \by\by^{\top}\|_F^2 - \theta t\right) \\
  &= \exp\left(\frac{\sigma^2}{2} \theta^2 - \theta t\right).
\end{align*}
Taking $\theta = t / \sigma^2$, we have
\begin{equation}
    \mathbb{P}[\by^{\top}\bM \by \geq t] \leq \exp\left(-\frac{t^2}{2\sigma^2}\right).
\end{equation}
    Now, let $\sS \subset \RR^n$ be an $\epsilon$-net of the unit sphere for some $\epsilon \in (0, 1)$.
    A standard argument (Lemma 2.3.2 of \cite{Tao-2012-RandomMatrixTheory}) shows that
    \begin{equation}
        \|\bM\| \leq \frac{1}{1 - \epsilon} \max_{\by \in \sS} \by^{\top} \bM \by,
    \end{equation}
    Thus we have, taking $\epsilon = \frac{1}{2}$,
    \begin{align*}
        \PP[\|\bM\| \geq C\sqrt{n}]
        &\leq \PP\left[\max_{\bv \in \sS} \by^{\top}\bM\by \geq \frac{C}{2}\sqrt{n}\right] \\
        &\leq \exp\left(-\frac{C^2}{8\sigma^2}n\right) |\sS|
        \intertext{And by another standard fact (Lemma 2.3.4 of \cite{Tao-2012-RandomMatrixTheory}) it is possible to take $|\sS| \leq 3^n$, whereby}
        &\leq \exp\left(\left(\log(3) - \frac{C^2}{8\sigma^2}\right)n\right),
    \end{align*}
    and taking $C = 4\sigma$ gives the result.
\end{proof}

\begin{corollary}
    \label{cor:ub-general}
    For any $\bA_1, \dots, \bA_T \in \RR^{n \times n}_{\sym}$, $\Delta(\bA_1, \dots, \bA_T) = O(\sqrt{n} \cdot \max_i \|\bA_i\|_F)$.
\end{corollary}
\begin{proof}
    Let $F \colonequals \max_i \|\bA_i\|_F$.
    Proposition~\ref{prop:gsw} on $\symvec(\bA_1) / F, \dots, \symvec(\bA_T) / F$ gives that there are random $x_1, \dots, x_T \in \{ \pm 1\}^T$ such that $\sum_{i = 1}^T x_i\bA_i$ is $O(F^2)$-subgaussian.
    The result then follows from Proposition~\ref{prop:subgauss-mx}.
\end{proof}
\noindent
As Corollary 1.8 of \cite{DJR-2021-MatrixDiscrepancyMirrorDescent} shows, for $\|\bA_i\|_F$ equal for all $i$, this is the best possible general operator norm discrepancy bound (and thus the best possible version of a ``matrix \Komlos\ conjecture'' result).
In fact, our results below will give an alternative proof of their lower bound, showing that even i.i.d.\ random GOE matrices achieve it (when $T \sim n^2$).

When $T \sim n^2$, we note that the upper bound of Corollary~\ref{cor:ub-general} is a natural counterpart to the lower bound of Lemma~\ref{lem:spectral-lb}.
Indeed, by loosening the upper bound, we may bound on both sides by the spectrum of $\bM$ the Gram matrix of the $\bA_i$ as
\begin{equation}
    \sqrt{\frac{T}{n}\lambda_{\min}(\bM)} \leq \Delta(\bA_1, \dots, \bA_T) \leq \sqrt{n \lambda_{\max}(\bM)}.
\end{equation}
When $T \sim n^2$, the initial factors are both $\Theta(n)$.
Thus, in this regime, so long as $\bM$ is well-conditioned---having smallest and largest eigenvalues of the same order---these two rather naive bounds alone will give a description of the discrepancy that is tight up to constants.
This is the general idea that will drive our proof of Theorem~\ref{thm:lb}.

\begin{rem}[Second moment method]
    \label{rem:2mm}
    The approach we take to upper bounds is rather unsatisfying: it does not give accurate results except in the regime $T \sim n^2$, and even there is likely not tight at the level of multiplicative constants.
    We suspect that better upper bounds could be obtained via the second moment method, a natural counterpart to the first moment method we use for some of our lower bounds.
    However, this seems quite technically challenging even for $\bA_i \sim \GOE(n)$: while for the first moment method one must understand the probability of a GOE matrix having unusually small operator norm, which may be treated with large deviations theory when $T \sim n^2$ to obtain more precise results than ours, the second moment method leads to considering the probability that \emph{two} correlated GOE matrices \emph{both} have small operator norm. This seems to be a much more challenging problem; we propose the pursuit of this as a natural direction for future work.
\end{rem}

\subsection{GOE Matrices: Proof of Theorem~\ref{thm:goe}}

We isolate the main technical part of our argument, an estimate on small operator norm ball probabilities for the GOE, which may be of independent interest.
First, we give the following lower bound on the Vandermonde determinant with parameters confined to an interval.
We learned of this bound from the online discussion \cite{Pinelis-2017-VandermondePost}, but include a proof for the sake of completeness.

\begin{lemma}
    For any $n \geq 1$,
    \label{lem:vandermonde}
    \begin{equation}
        \sup_{0 \leq \lambda_1 \leq \cdots \leq \lambda_n \leq 1} \prod_{1 \leq i < j \leq n}|\lambda_i - \lambda_j| = \prod_{j = 0}^{n - 1} \frac{j^j(j + 1)^{(j + 1)/2}}{(j + n - 1)^{(j + n - 1)/2}} = 2^{-(1 + o(1))n^2}.
    \end{equation}
\end{lemma}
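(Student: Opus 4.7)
The plan is to compute the supremum by variational analysis on the open simplex $\{0 < \lambda_1 < \cdots < \lambda_n < 1\}$ and then match the resulting closed form to the one in the statement. Writing
\begin{equation*}
f(\lambda_1, \ldots, \lambda_n) \colonequals \sum_{i < j} \log(\lambda_j - \lambda_i),
\end{equation*}
I would first observe that $\partial_{\lambda_1} f = -\sum_{j > 1}(\lambda_j - \lambda_1)^{-1} < 0$, so decreasing $\lambda_1$ always strictly increases $f$; hence any maximizer must satisfy $\lambda_1 = 0$, and symmetrically $\lambda_n = 1$. With these two coordinates pinned, the critical-point equations for the remaining $\lambda_2, \ldots, \lambda_{n-1}$ are the classical Stieltjes electrostatic equations
\begin{equation*}
    \frac{1}{\lambda_k} + \frac{1}{\lambda_k - 1} + \sum_{j \notin \{1, k, n\}} \frac{1}{\lambda_k - \lambda_j} = 0, \qquad k = 2, \ldots, n-1,
\end{equation*}
whose unique solution in $(0, 1)$ consists of the roots of the shifted Jacobi polynomial $P_{n-2}^{(1,1)}(2x - 1)$ (which correspond to two external unit charges placed at the endpoints $0$ and $1$). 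A compactness argument, together with the uniqueness of this critical configuration, identifies it as the global maximizer.

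Second, I would invoke the explicit formula for the discriminant of the Jacobi polynomial $P_{n-2}^{(1,1)}$ (as in, e.g., Theorem 6.71 of Szeg\H{o}'s monograph on orthogonal polynomials), combined with the endpoint contributions $\prod_{k=2}^{n-1} \lambda_k$ and $\prod_{k=2}^{n-1}(1 - \lambda_k)$, which can in turn be read off from the constant term and the value at $1$ of that polynomial via Vi\`ete's formulas. Rearranging the resulting product of Gamma-function ratios --- essentially relabeling indices and collecting the exponent of each integer $k$ appearing in the numerator and denominator --- would then yield the stated closed form $\prod_{j=0}^{n-1} \frac{j^j(j+1)^{(j+1)/2}}{(j+n-1)^{(j+n-1)/2}}$. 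This bookkeeping step is the main technical obstacle of the proof: matching Schur's discriminant formula, which involves products of shifted Gamma values, against the compact form in the statement demands careful algebra, while both the variational argument above and the asymptotic below are comparatively routine.

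Finally, to extract the asymptotic, I would take the logarithm of the product, writing it as a sum of terms $j \log j$, $\tfrac{j+1}{2}\log(j+1)$, and $-\tfrac{j+n-1}{2}\log(j+n-1)$, each of which admits a Riemann-integral approximation. The integrals $\int_0^n x \log x \, dx$ and $\tfrac{1}{2}\int_0^n x \log x \, dx$ from the first two terms combine to contribute $\tfrac{3n^2}{4}\log n - \tfrac{3n^2}{8}$, which is exactly cancelled by the $\log n$ contribution from $\tfrac{1}{2}\int_n^{2n} x \log x \, dx$ arising from the third term, leaving a single surviving constant of $-n^2 \log 2 + o(n^2)$ coming from the $\log(2n)$ factor in the upper limit. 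This yields the claimed asymptotic $2^{-(1+o(1))n^2}$.
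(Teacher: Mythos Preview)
Your proposal is correct, but it takes a genuinely different route from the paper. The paper does not argue variationally at all: instead it notes that the closed form may be read off by taking a limit of the Selberg integral (essentially using that $\bigl(\int_{[0,1]^n} \prod_{i<j}|\lambda_i - \lambda_j|^{2\gamma}\,d\bm\lambda\bigr)^{1/2\gamma} \to \sup \prod_{i<j}|\lambda_i - \lambda_j|$ as $\gamma \to \infty$, and the left-hand side is known in closed form as a product of Gamma values), citing Forrester's survey and an online post of Pinelis. The asymptotic step in the paper is the same as yours, namely writing the logarithm as a sum and comparing to an integral.

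Your approach via Stieltjes' electrostatic characterization and the discriminant of $P_{n-2}^{(1,1)}$ is the classical constructive route: it has the advantage of explicitly identifying the maximizing configuration (the endpoints together with the Jacobi nodes), which the Selberg limit argument does not give. The Selberg approach, on the other hand, is shorter once one is willing to quote the integral formula, and avoids the ``careful algebra'' you flag in matching Schur's discriminant formula to the stated product --- in the Selberg route that product falls out directly from the Gamma asymptotics. Both are standard and either would be acceptable here; the paper simply opts for the reference-and-sketch version.
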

\begin{proof}
    Let us denote the Vandermonde determinant by:
    \begin{equation}
        V(\bm\lambda) \colonequals \prod_{1 \leq i < j \leq n}|\lambda_i - \lambda_j|.
    \end{equation}
    Then, the classical Selberg integral (see, e.g., Equation (1.1) of \cite{Forrester-2008-SelbergIntegral}) gives
    \begin{align*}
        \int_{[0, 1]^n} V(\bm\lambda)^{2\beta} d\bm\lambda
        &= \prod_{j = 0}^{n - 1}\frac{\Gamma(1 + j\beta)^2\Gamma(1 + (j + 1)\beta)}{\Gamma(2 + (n + j - 1)\beta)\Gamma(1 + \beta)} \\
        &= n! \prod_{j = 0}^{n - 1}\frac{\Gamma(1 + j\beta)^2\Gamma((j + 1)\beta)}{\Gamma(2 + (n + j - 1)\beta)\Gamma(\beta)}.
    \end{align*}
    We then have, using that $\lim_{\beta \to \infty} \beta^{-y}\Gamma(x + y\beta)^{1/\beta} = (y / e)^y$ by Stirling's approximation,
    \begin{align*}
        &\hspace{-0.25cm}\sup_{0 \leq \lambda_1 \leq \cdots \leq \lambda_n \leq 1} \prod_{1 \leq i < j \leq n}|\lambda_i - \lambda_j| \\
        &= \lim_{\beta \to \infty} \left(\int_{[0, 1]^n} V(\bm\lambda)^{2\beta} d\bm\lambda \right)^{1 / 2\beta} \\
        &= \prod_{j = 0}^{n - 1}\frac{j^j(j + 1)^{(j + 1)/2}}{(j + n - 1)^{(j + n - 1)/2}}
        \intertext{giving the first formula. For the second estimate, we rewrite}
        &= \exp\left(\sum_{j = 0}^{n - 1} \left(j\log(j) + \frac{j + 1}{2}\log(j + 1) - \frac{j + n - 1}{2} \log(j + n - 1)\right)\right) \\
        &= \exp\left(n\left[\sum_{j = 0}^{n - 1} \left(\frac{j}{n}\log\left(\frac{j}{n}\right) + \frac{j + 1}{2n}\log\left(\frac{j + 1}{n}\right) - \left(\frac{1}{2} + \frac{j - 1}{2n}\right) \log\left(1 + \frac{j - 1}{n}\right)\right)\right]\right)
        \intertext{and using standard estimates on Riemann sums,}
        &= \exp\left((1 + o(1))n^2\left[\frac{3}{2} \int_0^1 x\log x \, dx - \int_0^1 \frac{1 + x}{2} \log(1 + x) \, dx\right]\right) \\
        &= \exp\left((1 + o(1))n^2\left[-\frac{3}{2} \cdot \frac{1}{4} + - \log(2) + \frac{3}{8}\right]\right) \\
        &= 2^{-(1 + o(1))n^2},
    \end{align*}
   as claimed.
\end{proof}

\begin{corollary}
    \label{cor:goe-small-ball}
    For any $\delta = \delta(n) > 0$,
    \begin{equation}
        \Px_{\bW \sim \GOE(n, 1)}[\|\bW\| \leq \delta] \leq \left(\frac{e^{3/4}}{2\sqrt{n}} \delta\right)^{(1 + o(1))n^2/2}.
    \end{equation}
\end{corollary}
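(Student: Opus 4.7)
The plan is to use the explicit joint density of the eigenvalues $\lambda_1, \dots, \lambda_n$ of $\bW \sim \GOE(n, 1)$, which is proportional to $\prod_{i < j}|\lambda_i - \lambda_j| \exp(-\tfrac{1}{4}\sum_i \lambda_i^2)$, with normalization $Z_n$ given explicitly by the Mehta--Selberg integral. Since the event $\{\|\bW\| \le \delta\}$ coincides with $\{\max_i |\lambda_i| \le \delta\}$, the probability of interest equals $1/Z_n$ times the integral of this density over the box $[-\delta,\delta]^n$ (with the usual symmetry factor accounting for unordered eigenvalues absorbed into $Z_n$).

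For the numerator I would use three elementary pointwise/volume estimates: the Gaussian weight is at most $1$ on any bounded set; the Vandermonde factor, after the affine rescaling $\lambda_i = \delta(2\mu_i - 1)$ that sends $[-\delta,\delta]$ to $[0,1]$, is bounded by $(2\delta)^{n(n-1)/2} \cdot 2^{-(1 + o(1))n^2}$ using Lemma~\ref{lem:vandermonde}; and the box has volume $(2\delta)^n$. Multiplying these produces a numerator bound of $(2\delta)^{n(n+1)/2} \cdot 2^{-(1+o(1))n^2}$, and extracting $\delta^{n(n+1)/2} = \delta^{n^2/2(1+o(1))}$ supplies the claimed $\delta$-dependence.

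For the denominator I would apply the Mehta--Selberg formula, which expresses $Z_n$ in closed form as a product of Gamma values $\prod_{j=1}^n \Gamma(1 + j/2)$ times elementary factors of $2$ and $2\pi$. A direct Stirling expansion, combined with the Euler--Maclaurin estimate $\sum_{j=1}^n j\log j = \tfrac{n^2 \log n}{2} - \tfrac{n^2}{4} + o(n^2)$, yields $\log Z_n = \tfrac{n^2 \log n}{4} - \tfrac{3 n^2}{8} + o(n^2)$, i.e., $Z_n = n^{n^2/4(1+o(1))} e^{-3n^2/8(1+o(1))}$. Combining the numerator and denominator bounds and collecting powers of $\delta$, $n$, $2$, and $e$ then produces exactly $\bigl(\tfrac{e^{3/4}}{2\sqrt{n}}\delta\bigr)^{(1+o(1))n^2/2}$.

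The main obstacle is the bookkeeping in the Stirling asymptotic for $Z_n$. The factor $n^{n^2/4}$ in $Z_n$ is precisely what becomes the $(1/\sqrt{n})^{n^2/2}$ in the final bound, and the factor $e^{-3n^2/8}$ is what produces the $e^{3/4}$; the Vandermonde and volume contributions alone only deliver the correct power of $2$. Getting both constants right therefore rests on a careful Stirling expansion of the specific Gamma-function product appearing in the Mehta--Selberg formula, which is where the effort concentrates.
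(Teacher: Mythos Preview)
Your proposal is correct and follows essentially the same approach as the paper: both bound the probability via the explicit eigenvalue density, using Lemma~\ref{lem:vandermonde} for the Vandermonde factor, the trivial bound on the Gaussian weight, the box volume, and a Stirling-type expansion of the Selberg/Mehta normalization constant. The only difference is organizational---the paper packages the normalization as $C_n = 1/Z_n$ and bounds the density pointwise before multiplying by the ordered-simplex volume, whereas you split the computation into a numerator over the unordered box and a separate $Z_n$ asymptotic---but the ingredients and the arithmetic are the same.
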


\begin{proof}
    For $\bW \sim \GOE(n, 1)$ with ordered eigenvalues $\lambda_1 \geq \cdots \geq \lambda_n \in \RR$, the joint density of $(\lambda_1, \dots, \lambda_n)$ is known \cite[Theorem 2.5.2]{AGZ-2010-RandomMatrices} to be
    \begin{equation}
        \rho(\lambda_1, \dots, \lambda_n) = \One\{\lambda_1 \geq \cdots \geq \lambda_n\} \cdot C_n \exp\left(-\frac{1}{4}\sum_{i = 1}^n \lambda_i^2\right)\prod_{1 \leq i < j \leq n} |\lambda_i - \lambda_j|,
    \end{equation}
    where
    \begin{align*}
        C_n
        &= (2\pi)^{-n/2} 2^{-n(n + 1)/4} \prod_{j = 1}^n \frac{\Gamma(1/2)}{\Gamma(j/2)} \\
        &= 2^{-n(n + 3)/4} \prod_{j = 1}^n \frac{1}{\Gamma(j/2)}
        \intertext{where we have used that $\Gamma(1/2) = \sqrt{\pi}$. Now, using that, for $x \geq 1$, $\Gamma(x) \geq (\frac{x -1}{e})^{x - 1}$, we find}
        &\leq \sqrt{2\pi} \, 2^{-n(n + 3)/4} \exp\left(-\sum_{j = 0}^{n - 2} \frac{j}{2} \log\left(\frac{j}{2e}\right)\right) \\
        &\leq \sqrt{2\pi} \, 2^{-n(n + 3)/4 + (n - 2)(n - 1)/4} \exp\left(-\frac{1}{2}\sum_{j = 1}^{n - 2} j \log j + \frac{(n - 2)(n - 1)}{4}\right) \\
        &\leq  \sqrt{2\pi} \, 2^{-(3n - 1)/2} \exp\left(-\frac{1}{2}\int_{1}^{n - 2} x \log x \, dx + \frac{(n - 2)(n - 1)}{4}\right) \\
        &= \sqrt{2\pi} \, 2^{-(3n - 1)/2} \exp\left(-\frac{1}{4}(n - 2)^2\log(n - 2) + \frac{1}{8}(n - 2)^2 + \frac{(n - 2)(n - 1)}{4} + \frac{1}{4}\right)
        \intertext{and, estimating roughly,}
        &\leq (e^{3/4}\sqrt{n})^{-(1 + o(1))n^2/2}. \numberthis
    \end{align*}

    Using this and Lemma~\ref{lem:vandermonde}, we find that, whenever $\delta \geq \lambda_1 \geq \cdots \geq \lambda_n \geq -\delta$, then
    \begin{equation}
        \rho(\lambda_1, \dots, \lambda_n) \leq (e^{3/4}\sqrt{n})^{-(1 + o(1))n^2/2} \cdot (2\delta)^{n(n - 1)/2} \cdot 2^{-(1 + o(1))n^2} = \left(\frac{e^{3/4}}{2\sqrt{n}}\delta\right)^{(1 + o(1))n^2/2}.
    \end{equation}
    Since the volume of the set of such $(\lambda_1, \dots, \lambda_n)$ is $\frac{1}{n!}(2\delta)^n$ which is at most 1 for any $\delta$ for which the claim is not vacuously true, the result follows.
\end{proof}

\begin{rem}[Tighter estimates from large deviations principles]
    A more precise analysis of this probability in the special scaling $\delta = \what{\delta} \sqrt{n}$ may be obtained by the large deviations principle for the empirical spectral distribution of the GOE of \cite{BAG-1997-WignerLargeDeviations}.
    This would give the precise function $f(\what{\delta})$ such that, to leading order, $\PP[\|\bW\| \leq \what{\delta}\sqrt{n} ] \approx f(\what{\delta})^{n^2/2}$.
    When $T = cn^2$, this would give a bound on the constant $g(c)$ such that $\Delta(\bA_1, \dots, \bA_T) \sim g(c)n$.
    We do not pursue this more careful analysis here since our upper bounds are loose anyway, but propose the determination of $g(c)$ as an interesting open problem.
    See also our discussion of the second moment method in Remark~\ref{rem:2mm}.
    By comparison, our approach has the advantage of also applying to the case of ``very large deviations'' when $\delta \ll \sqrt{n}$.
\end{rem}

\begin{rem}[Alternative estimates]
    Another way to bound this ``small ball probability'' of the GOE is to make the following observations, which David Gamarnik brought to our attention after this paper was posted online.
    First, the density of the GOE is maximized at the zero matrix, where it is at most $c^{n(n + 1)/2}$ for a $c > 0$.
    Thus $\PP[\|\bW\| \leq \delta] \leq (c\delta)^{n(n + 1)/2} V_n$, where $V$ is the volume of the unit ball under the operator norm in $\RR^{n \times n}_{\sym}$.
    The unit ball under the operator norm is contained in the ball of radius $\sqrt{n}$ under the Frobenius norm, which is just the ordinary $\ell^2$ norm in this space (up to a constant factor).
    Thus $V_n \leq \sqrt{n}^{n(n + 1)/2} n^{-(1 + o(1))n(n + 1)/2} \leq n^{-(1 + o(1))n(n + 1)/4}$, and a similar result to ours up to constants follows from combining these.
    We give the above approach instead since, as mentioned above, it is more likely to be possible to improve to give bounds that are sharp even at the level of constant factors.
\end{rem}

\begin{proof}[Proof of Theorem~\ref{thm:goe}]
For the upper bound, we just note that, for $\bW \sim \GOE(n)$, $\|\bW\|_F = O(\sqrt{n})$ with high probability by standard concentration inequalities.
The result then follows by Corollary~\ref{cor:ub-general}.

For the lower bound, starting from \eqref{eq:1mm}, we note that $\sum_{i = 1}^T y_i \bA_i$ has \emph{exactly} the same law as $\sqrt{T / n} \bW$ for $\bW \sim \GOE(n)$.
Thus we have, for a constant $C > 0$,
\begin{align*}
    \EE N_{\delta}
    &= 2^T \PP\left[\|\bW\| \leq \delta\sqrt{\frac{n}{T}}\right] \\
    &\leq 2^T \left(\frac{C\delta}{\sqrt{T}}\right)^{n^2/2}
    \intertext{and, when $\delta \leq (C + \epsilon)^{-1}\sqrt{T} 4^{-T/n^2}$ for some fixed $\epsilon > 0$, then}
    &\leq \left(\frac{C}{C + \epsilon}\right)^{n^2/2}, \numberthis
\end{align*}
which tends to zero as $n \to \infty$.
\end{proof}

\subsection{General Analysis: Proof of Theorem~\ref{thm:lb}}

We use the spectral lower bound on the discrepancy of Lemma~\ref{lem:spectral-lb}, together with the following powerful general bound on the spectrum of Gram matrices of independent vectors.

\begin{definition}[Orlicz norm]
    Let $\ba \in \RR^N$ be a random vector.
    We define its \emph{$\psi_{1}$ norm} to be:
    \begin{equation}
        \|\ba\|_{\psi_{1}} \colonequals \sup_{\by \in \SS^{N - 1}} \inf\left\{C > 0: \EE\left[\exp\left(\frac{|\langle \ba, \by \rangle|}{C}\right)\right] \leq 2\right\}.
    \end{equation}
\end{definition}

\begin{proposition}[Theorem 3.3 of \cite{ALPTJ-2011-RIPIndependentColumns}]
    \label{prop:alptj}
    Let $T \leq N$, $\ba_1, \dots, \ba_T$ be independent centered random vectors, each with finite $\psi_{1}$ norm, and let $\psi \colonequals \max_{i \in [T]} \|\ba_i\|_{\psi_{1}}$.
    Let $\bM \in \RR^{T \times T}_{\sym}$ have entries $M_{ij} = \langle \ba_i, \ba_j \rangle$.
    There exist universal constants $C_1, C_2 > 0$ such that, for any $\theta \in (0, 1)$, we have
    \begin{equation}
        1 - \delta \leq \lambda_{\min}(\bM) \leq \lambda_{\max}(\bM) \leq 1 + \delta,
    \end{equation}
    where
    \begin{equation}
        \delta = C_1(\psi \sqrt{N} + \sqrt{1 + \theta})^2 \sqrt{\frac{T}{N}}\left(1 + \frac{1}{2}\log\left(\frac{N}{T}\right)\right) + \theta,
    \end{equation}
    with probability at least
    \begin{equation}
        1 - C_1 \exp\left(-C_2 \sqrt{T}\left(1 + \frac{1}{2}\log\left(\frac{N}{T}\right)\right)\right) - 2\PP\left[ \max_{i \in [T]} \big| \|\ba_i\|^2 - 1\big| \geq \theta\right].
    \end{equation}
\end{proposition}

\begin{proof}[Proof of Theorem~\ref{thm:lb}]
    Note that, under Condition~1 of the Theorem, we have $\|\bA_i\|_F^2 \in [\frac{1}{2}r, \frac{3}{2}r]$ for all $i \in [n]$ with high probability, by a union bound.
    Thus the upper bound in the Theorem is immediate from Corollary~\ref{cor:ub-general}.

    For the lower bound, we will use Lemma~\ref{lem:spectral-lb} together with the above tools.
    Let us use Proposition~\ref{prop:alptj} with $\ba_i \colonequals \symvec(\bA_i) / \sqrt{r}$, $N = \frac{n(n + 1)}{2}$ (the dimension of the $\ba_i$), and $\theta = \frac{1}{2}$.
    By the above observation, since $\|\ba_i\|^2 = \|\bA_i\|_F^2 / r \in [\frac{1}{2}, \frac{3}{2}]$ with high probability, we have $\PP\left[ \max_{i \in [T]} \big| \|\ba_i\|^2 - 1\big| \geq \theta\right] = o(1)$.
    By Condition 2 of the Theorem, we have $\psi = O(1/n) = O(1 / \sqrt{N})$.
    Taking $T = \epsilon n^2$ for sufficiently small $\epsilon$, we can then ensure that, say, $\lambda_{\min}(\bM) \geq \frac{1}{3}$ with high probability.

    Since $\langle \ba_i, \ba_j \rangle = \frac{1}{r} \langle \bA_i, \bA_j \rangle$, using this with Lemma~\ref{lem:spectral-lb} then shows that $\Delta(\bA_1, \dots, \bA_T) \geq \sqrt{3rT / n} = \sqrt{3\epsilon rn}$ with high probability, completing the proof.
\end{proof}

\subsection{Wishart Matrices: Proof of Corollary~\ref{cor:lb-proj}}

As mentioned earlier, to prove this result we will need to make a slight modification to the argument above for the general Theorem~\ref{thm:lb}.

\begin{proof}[Proof of Corollary~\ref{cor:lb-proj}]
For the sake of clarity, let us consider $\bG_i$ having i.i.d.\ entries distributed as $\sN(0, 1)$ instead of $\sN(0, \frac{1}{n})$.
Then, the statement concerns $\bA_i = \frac{1}{n}\bG_i\bG_i^{\top}$.

Note that $\EE \bG_i\bG_i^{\top} = r \bm I_n$.
Instead of working directly with $\bG_i\bG_i^{\top}$, we will work first with the centered version, $\bA_i^{(0)} = \frac{1}{n}(\bG_i\bG_i^{\top} - r \bm I_n)$, and recover the stated version from a result on this one.

We will use an intermediate result from the proof of Theorem~\ref{thm:lb} for these $\bA_i^{(0)}$.
Let us first verify that the Theorem applies.

For Condition 1, though it is possible to analyze more directly, we may also reuse Proposition~\ref{prop:alptj}.
This implies, when applied to the $\bG_i$, that, with high probability, for any $\delta > 0$, $1 - \delta \leq \lambda_{\min}(\frac{1}{n} \bG_i^{\top} \bG_i) \leq \lambda_{\max}(\frac{1}{n}\bG_i^{\top}\bG_i) \leq 1 + \delta$ for all $i \in [n]$ (using here our assumption that $r \ll n$).
The required bound on the Frobenius norms then follows immediately.

For Condition 2, let us work with a single $\bG \in \RR^{n \times r}$ with i.i.d.\ standard Gaussian entries.
It suffices to show that, for some absolute constant $c > 0$ and any $\bY \in \RR^{n \times n}_{\sym}$ with $\|\bY\|_F = 1$, we have
\begin{equation}
    \EE \exp\left(c\frac{n}{\sqrt{r}} \left|\left\langle \frac{1}{n}(\bG\bG^{\top} - r \bm I_n), \bY \right\rangle\right|\right) \stackrel{?}{\leq} 2.
\end{equation}
Let us write $\bg_1, \dots, \bg_r \in \RR^n$ for the (random standard Gaussian vector) columns of $\bG$.
First, observe that we have the general inequality $\exp(|t|) \leq \frac{2}{3}(\exp(2t) + \exp(-2t))$ for all $t \in \RR$.
Thus let us control first the same expression without absolute values, and with $c$ multiplied by 2.
We have:

\begin{align*}
    &\hspace{-1cm}\EE \exp\left(2c\frac{n}{\sqrt{r}} \left\langle \frac{1}{n}(\bG\bG^{\top} - r \bm  I_n), \bY \right\rangle\right) \\
    &= \EE \exp\left(\frac{2c}{\sqrt{r}} \langle \bG\bG^{\top} - r \bm I_n, \bY \rangle\right) \\
    &= \EE \exp\left(\frac{2c}{\sqrt{r}} \left\langle \sum_{i = 1}^r \left(\bg_i\bg_i^{\top} - \bm I_n\right), \bY \right\rangle\right) \\
    &= \left(\exp\left(-\frac{2c}{\sqrt{r}} \Tr(\bY)\right) \EE \exp\left(\frac{2c}{\sqrt{r}} \bg_1^{\top} \bY \bg_1 \right)\right)^r \\
    &= \left(\exp\left(-\frac{2c}{\sqrt{r}} \Tr(\bY)\right) \det\left(\bm I_n - \frac{4c}{\sqrt{r}} \bm Y\right)^{-1/2}\right)^r \\
    &= \exp\left(-2c\sqrt{r}\Tr(\bY) - \frac{r}{2}\sum_{i = 1}^n \log\left(1 - \frac{4c}{\sqrt{r}}\lambda_i(\bY)\right)\right)
    \intertext{Let us take $c \leq \frac{1}{8}$, so that $\frac{4c}{\sqrt{r}}\lambda_i(\bY) \leq \frac{1}{2}$ for any $r \geq 1$ (recalling that $\|\bY\| \leq \|\bY\|_F = 1$ by assumption).
    One may check that $-\log(1 - t) \leq t + t^2$ for all $0 < t \leq \frac{1}{2}$, whereby}
    &\leq \exp\left(-2c\sqrt{r}\Tr(\bY) + \frac{r}{2}\sum_{i = 1}^n \left(\frac{4c}{\sqrt{r}}\lambda_i(\bY) + \frac{16c^2}{r}\lambda_i(\bY)^2\right)\right)
    \intertext{and the first two terms cancel, leaving}
    &= \exp\left(8c^2\|\bY\|_F^2\right) \\
    &= \exp(8c^2).
\end{align*}
Now, negating the quantity in the exponential amounts to replacing $\bY$ by $-\bY$, which has no effect on the above calculation.
Thus we find
\begin{equation}
\EE \exp\left(c\frac{n}{\sqrt{r}} \left|\left\langle \frac{1}{n}(\bG\bG^{\top} - r \bm I_n), \bY \right\rangle\right|\right) \leq \frac{4}{3}\exp(8c^2),
\end{equation}
and taking $c > 0$ small enough will then show that Condition 2 of the Theorem is satisfied.

Instead of using the Theorem directly, however, we will use that these conditions allow us to apply Lemma~\ref{lem:spectral-lb}, as used in the proof of the Theorem.
Let $\bM^{(0)} \in \RR^{T \times T}_{\sym}$ have $M_{ij}^{(0)} = \langle \bA_i^{(0)}, \bA_j^{(0)}\rangle$.
Via Proposition~\ref{prop:alptj}, we then have that, for $T = \epsilon n^2$ for $\epsilon > 0$ sufficiently small, $\lambda_{\min}(\bM^{(0)}) \geq \frac{1}{2}r$.

Next, let $\bA_i \colonequals \frac{1}{n}\bG_i\bG_i^{\top}$ be the matrices we are actually interested in, and let $M_{ij} \colonequals \langle \bA_i, \bA_j \rangle$.
We will control $\lambda_{\min}(\bM)$ by reducing it to our analysis of $\bM^{(0)}$.
We have
\begin{align*}
    M^{(0)}_{ij}
    &= M_{ij} - \left\langle \frac{r}{n}\bm I_n, \frac{1}{n}\bG_j \bG_j^{\top} \right\rangle - \left\langle \frac{1}{n}\bG_i \bG_i^{\top}, \frac{r}{n}\bm I_n \right\rangle + \left\langle \frac{r}{n}\bm I_n, \frac{r}{n}\bm I_n \right\rangle \\
    &= M^{(1)}_{ij} - \frac{r}{n^2}\|\bG_i\|_F^2 - \frac{r}{n^2} \|\bG_j\|_F^2 + \frac{r^2}{n} \\
    &= M^{(1)}_{ij} - \frac{r^2}{n} - \frac{r}{n^2}(\|\bG_i\|_F^2 - \EE \|\bG_i\|_F^2)- \frac{r}{n^2} (\|\bG_j\|_F^2 - \EE \|\bG_j\|_F^2) \\
    &= M^{(1)}_{ij} - \frac{r^2}{n}\left(1 + \frac{1}{rn}(\|\bG_i\|_F^2 + \EE \|\bG_i\|_F^2) + \frac{1}{rn} (\|\bG_j\|_F^2 - \EE \|\bG_j\|_F^2)\right)
\end{align*}
In matrix terms, we may write
\begin{equation}
    \bM^{(0)} = \bM - \frac{r^2}{n}\left(\one\one^{\top} + \by \one^{\top} + \one \by^{\top}\right) = \bM^{(1)} - \frac{r^2}{n}(\one + \by)(\one + \by)^{\top} + \frac{r^2}{n}\by\by^{\top},
\end{equation}
where $y_i \colonequals \frac{1}{rn}(\|\bG_i\|_F^2 - \EE \|\bG_i\|_F^2)$.
Therefore we have
\begin{align*}
    \lambda_{\min}(\bM)
    &\geq \lambda_{\min}(\bM^{(0)}) - \frac{r^2}{n}\|\by\by^{\top}\| \\
    &\geq \frac{1}{2}r - \frac{r^2}{n}\|\by\|^2 \\
    &= \frac{1}{2}r - \frac{1}{n^3}\sum_{i = 1}^n (\|\bG_i\|_F^2 - \EE \|\bG_i\|_F^2)^2
\end{align*}
The second term is non-negative, and its expectation is, by a routine calculation, equal to $2\frac{r}{n}$.
Thus, with high probability as $n \to \infty$ by Markov's inequality the second term is at most $\frac{1}{4}r$, on which event we have $\lambda_{\min}(\bM) \geq \frac{1}{4}r$.
The result then follows from Lemma~\ref{lem:spectral-lb}.
\end{proof}

\section{Matrix Hyperbolic Cosine Algorithm}
\label{sec:mhc-proofs}

\subsection{Linear Algebra Preliminaries}
\label{sec:linalg}

We review some properties of transcendental matrix functions.
\begin{definition}[Matrix exponential] \label{def:matrix-exponential}
    For $\bM \in \RR^{n\times n}_{\sym}$, the \emph{exponential} of $\bM$, denoted by $e^{\bM}$ or $\exp(\bM)$, is defined as
    \begin{equation} e^{\bM} \colonequals \sum_{k=0}^\infty \frac{1}{k!}\bM^k.\end{equation}
\end{definition}

\begin{proposition}
    Suppose $\bM \in \RR^{n\times n}_{\sym}$ has eigendecomposition $\bM = \bU \bD \bU^\top$, with $\bD$ diagonal. Then $e^{\bM} = \bU e^{\bD} \bU^\top$, where $e^{\bD}$ is diagonal with $(e^{\bD})_{ii} = e^{D_{ii}}$.
\end{proposition}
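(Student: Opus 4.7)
The plan is to expand the matrix exponential via its defining power series and exploit orthogonality of $\bU$ to telescope the matrix products. First I would establish by induction on $k \geq 0$ that $\bM^k = \bU \bD^k \bU^\top$: the base case $k = 0$ uses $\bU \bU^\top = \bm I_n$, and the inductive step collapses the middle factors via $\bU^\top \bU = \bm I_n$, so that $(\bU \bD \bU^\top)(\bU \bD^k \bU^\top) = \bU \bD^{k+1}\bU^\top$. Substituting into the definition of $e^{\bM}$ yields
\begin{equation}
    e^{\bM} = \sum_{k=0}^{\infty} \frac{1}{k!} \bU \bD^k \bU^\top.
\end{equation}

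The next step is to interchange the infinite sum with the conjugation by $\bU$. This is the only point requiring care: it is justified because conjugation $\bX \mapsto \bU \bX \bU^\top$ is a bounded linear operation on $\RR^{n \times n}$, and the series of partial sums converges absolutely (e.g., in Frobenius norm, since $\|\bD^k\|_F \leq n^{1/2} \|\bD\|^k$ is summable after division by $k!$). Pulling $\bU$ and $\bU^\top$ outside gives $e^{\bM} = \bU \big( \sum_{k=0}^{\infty} \frac{1}{k!} \bD^k \big) \bU^\top = \bU e^{\bD} \bU^\top$. Finally, since $\bD$ is diagonal, each $\bD^k$ is diagonal with $(\bD^k)_{ii} = D_{ii}^k$, so summing termwise along the diagonal gives $(e^{\bD})_{ii} = \sum_{k=0}^{\infty} \frac{1}{k!} D_{ii}^k = e^{D_{ii}}$, completing the claim. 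No substantial obstacle arises; the proof is essentially the standard functional-calculus argument specialized to the exponential, with the absolute convergence justifying the interchange being the only non-algebraic input.
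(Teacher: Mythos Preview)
Your proof is correct and is the standard power-series/spectral-decomposition argument. The paper itself states this proposition without proof, treating it as an elementary fact about matrix functions, so there is nothing to compare against.
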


\begin{definition}[Matrix hyperbolic functions] \label{def:matrix-hyperbolic}
    For $\bM \in \RR^{n\times n}_{\sym}$, we define its \emph{hyperbolic sine} and \emph{hyperbolic cosine} as
    \begin{align}
        \sinh(\bM) &\colonequals \frac{e^{\bM} - e^{-\bM}}{2}, \\
        \cosh(\bM) &\colonequals \frac{e^{\bM} + e^{-\bM}}{2}.
    \end{align}
\end{definition}
It is simple to verify that many properties for scalar hyperbolic functions also apply to the matrix-valued counterparts. We will use these properties without proof when they are self-evident.

The following classical inequality allows us to bound the potential function we work with under addition, and addresses the main property of scalar transcendental functions that is \emph{not} inherited by their matrix versions, that $e^{x + y} = e^x e^y$.
\begin{proposition}[Golden-Thompson inequality {\cite[Equation (IX.19)]{Bhatia-2013-MatrixAnalysis}}] \label{prop:golden-thompson}
    For $\bX, \bY \in \RR^{n \times n}_{\sym}$, $\Tr e^{\bX + \bY} \leq \Tr e^{\bX} e^{\bY}$.
\end{proposition}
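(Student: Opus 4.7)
The plan is to prove the Golden-Thompson inequality by combining the Lie-Trotter product formula with a trace inequality of Lieb-Thirring type. The basic idea is that if the matrices commuted then $e^{\bX+\bY} = e^{\bX} e^{\bY}$ with equality in trace; Lie-Trotter recovers $e^{\bX+\bY}$ as a limit of symmetric-looking products of exponentials, and a Lieb-Thirring inequality controls the distortion caused by non-commutativity in the right direction to yield the bound.

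First, I would invoke the Lie-Trotter product formula: for $\bX, \bY \in \RR^{n\times n}_{\sym}$,
\[
e^{\bX + \bY} = \lim_{k \to \infty} \bigl( e^{\bX/k} e^{\bY/k} \bigr)^k,
\]
with convergence in operator norm (hence also in trace, since the trace is continuous in operator norm on $n \times n$ matrices). This is verified by expanding $e^{\bX/k} e^{\bY/k} = \bI + \tfrac{1}{k}(\bX + \bY) + O(1/k^2)$, multiplying $k$ copies, and matching terms with the exponential series of $\bX+\bY$ up to $o(1)$ error.

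Second, I would establish the key inequality: for any positive semidefinite $\bP, \bQ \in \RR^{n \times n}_{\sym}$ and integer $m \geq 1$,
\[
\Tr\bigl((\bP \bQ)^m\bigr) \;\leq\; \Tr\bigl(\bP^{m} \bQ^{m}\bigr).
\]
Applied to $\bP = e^{\bX/k}$ and $\bQ = e^{\bY/k}$ (both positive definite, as they are exponentials of symmetric matrices) with $m = k$, this gives
\[
\Tr\bigl((e^{\bX/k} e^{\bY/k})^{k}\bigr) \leq \Tr\bigl(e^{\bX} e^{\bY}\bigr).
\]
Passing to the limit $k \to \infty$ along the powers of two and using Lie-Trotter yields $\Tr e^{\bX+\bY} \leq \Tr(e^{\bX} e^{\bY})$, which is the desired inequality.

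The main obstacle is proving the Lieb-Thirring-type inequality in the preceding step. It suffices to handle $m = 2^j$, since this subsequence still gives the correct Lie-Trotter limit, and here one can argue by induction on $j$. Using cyclicity to symmetrize $\Tr((\bP\bQ)^{2^{j+1}}) = \Tr\bigl((\bQ^{1/2} \bP \bQ^{1/2})^{2^{j+1}}\bigr)$, writing the $2^{j+1}$-th power as the square of a $2^{j}$-th power of a positive semidefinite matrix, and applying Cauchy-Schwarz for the Hilbert-Schmidt inner product $|\Tr(\bA\bB)| \leq \sqrt{\Tr(\bA^2)\,\Tr(\bB^2)}$ reduces the claim to the inductive hypothesis with $\bP^2, \bQ^2$ in place of $\bP, \bQ$. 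The delicate part is choosing the split so both factors have the form required to close the induction; this is the classical trick of Lieb and Thirring and is the only nontrivial ingredient. Since the proposition is standard, the cleanest exposition would be to simply cite \cite{Bhatia-2013-MatrixAnalysis} as the authors do, but the above outlines what a self-contained proof would look like.
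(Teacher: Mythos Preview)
The paper does not prove this proposition at all: it is stated with a citation to Bhatia's \emph{Matrix Analysis} and used as a black box. Your outline via Lie--Trotter plus a Lieb--Thirring-type trace inequality is one of the standard proofs and is correct in spirit. The only place your sketch is loose is the description of the inductive step as ``Cauchy--Schwarz for the Hilbert--Schmidt inner product'': Cauchy--Schwarz alone cleanly handles the base case $\Tr((\bP\bQ)^2)\le\Tr(\bP^2\bQ^2)$, but the doubling step is more transparently phrased as the inequality $\Tr((\bP\bQ)^{2m})\le\Tr((\bP^2\bQ^2)^m)$, which follows from the fact that for any matrix $\bA$ one has $|\Tr(\bA^{2m})|\le\Tr((\bA^*\bA)^m)$ (eigenvalue--singular value majorization) applied to $\bA=\bP\bQ$. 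Iterating this along $m=2^{k-1},\dots,1$ with $\bP=e^{\bX/2^k}$, $\bQ=e^{\bY/2^k}$ gives a monotone chain terminating at $\Tr(e^{\bX}e^{\bY})$, and Lie--Trotter finishes. Since the proposition is quoted from the literature in the paper, your remark that the cleanest route is simply to cite Bhatia is exactly what the authors do.
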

\begin{corollary}
    \label{cor:cosh-subadditivity}
    For $\bX, \bY \in \RR_{\sym}^{n \times n}$,
    \begin{equation} \Tr\cosh(\bX + \bY) \leq \Tr\cosh(\bX)\cosh(\bY) + \Tr\sinh(\bX)\sinh(\bY). \end{equation}
\end{corollary}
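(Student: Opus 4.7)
The plan is to apply the Golden-Thompson inequality twice, once to $\bX + \bY$ and once to $-\bX - \bY$, and then expand the resulting exponentials in terms of hyperbolic functions to identify the right-hand side of the claim.

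First I would write
\begin{equation*}
    \Tr \cosh(\bX + \bY) = \tfrac{1}{2} \Tr e^{\bX + \bY} + \tfrac{1}{2} \Tr e^{-\bX - \bY}
\end{equation*}
directly from \Cref{def:matrix-hyperbolic}, and then apply \Cref{prop:golden-thompson} to each term (the second after noting that $-\bX, -\bY \in \RR^{n \times n}_{\sym}$) to obtain
\begin{equation*}
    \Tr \cosh(\bX + \bY) \leq \tfrac{1}{2} \Tr\bigl( e^{\bX} e^{\bY} + e^{-\bX} e^{-\bY}\bigr).
\end{equation*}

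Next I would substitute the identities $e^{\pm \bX} = \cosh(\bX) \pm \sinh(\bX)$ and $e^{\pm \bY} = \cosh(\bY) \pm \sinh(\bY)$, which are immediate from \Cref{def:matrix-hyperbolic}. Expanding the two products and summing, the two cross terms $\pm\cosh(\bX)\sinh(\bY)$ cancel, as do $\pm\sinh(\bX)\cosh(\bY)$, leaving
\begin{equation*}
    \tfrac{1}{2}\bigl(e^{\bX} e^{\bY} + e^{-\bX} e^{-\bY}\bigr) = \cosh(\bX)\cosh(\bY) + \sinh(\bX)\sinh(\bY).
\end{equation*}
Taking the trace of both sides and combining with the previous inequality yields the desired bound. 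Note that the cross terms need not vanish after taking the trace on their own (since $\bX$ and $\bY$ need not commute, $\cosh(\bX)$ and $\sinh(\bY)$ are generally not simultaneously diagonalizable), so the algebraic cancellation at the matrix level is essential rather than a trace-level simplification.

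There is no real obstacle here: the only subtlety is remembering to apply Golden-Thompson on \emph{both} sides of the definition of $\cosh$, so that the odd terms in the expansion cancel in aggregate, and using the nontrivial fact that $\cosh$ is an even function (so $\cosh(-\bX) = \cosh(\bX)$) and $\sinh$ is odd (so $\sinh(-\bX) = -\sinh(\bX)$), both of which follow directly from \Cref{def:matrix-hyperbolic}.
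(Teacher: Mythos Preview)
Your proof is correct and follows essentially the same approach as the paper: expand $\Tr\cosh(\bX+\bY)$ into two exponential traces, apply Golden--Thompson to each, and then regroup the resulting four exponentials into the hyperbolic functions. The only cosmetic difference is that the paper regroups via the identity $e^{\bX}e^{\bY}+e^{-\bX}e^{-\bY}=\tfrac{1}{2}\bigl((e^{\bX}+e^{-\bX})(e^{\bY}+e^{-\bY})+(e^{\bX}-e^{-\bX})(e^{\bY}-e^{-\bY})\bigr)$, whereas you substitute $e^{\pm\bX}=\cosh(\bX)\pm\sinh(\bX)$ and expand; these are the same algebra.
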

\noindent
We note that, when matrices are replaced by scalars, then the above inequality holds as an equality and is the standard ``sum rule'' for the hyperbolic cosine.
\begin{proof}
    We expand and apply Proposition~\ref{prop:golden-thompson}:
    \begin{align*}
        \Tr\cosh(\bX + \bY)
        &= \frac{1}{2}\Tr e^{\bX + \bY} + \Tr \frac{1}{2}e^{-\bX - \bY} \\
        &\leq \frac{1}{2}\Tr e^{\bX} e^{\bY} + \frac{1}{2}\Tr e^{-\bX} e^{-\bY} \\
        &= \frac{1}{4}\big( \Tr(e^{\bX} + e^{-\bX})(e^{\bY} + e^{-\bY}) + \Tr(e^{\bX} - e^{-\bX})(e^{\bY} - e^{-\bY}) \big) \\
        &= \Tr\cosh(\bX)\cosh(\bY) + \Tr\sinh(\bX)\sinh(\bY), \numberthis
    \end{align*}
    as claimed.
\end{proof}

\subsection{Khintchine-Type Anti-Concentration Inequalities}

We next provide tools for working with a useful sufficient condition for the MACI property used in our main results.
\begin{definition}
    \label{def:kaci}
    Let $\mu$ be a probability measure on $\RR^n$.
    We say that $\mu$ satisfies a \emph{Khintchine anti-concentration inequality} with \emph{constant} $\eta > 0$ if for all $\bx \in \RR^n$ we have
    \begin{equation}
        \Ex_{\ba \sim \mu}|\langle \bx, \ba \rangle| \geq \eta \|\bx\|_2.
    \end{equation}
    When this is the case, we write that $\mu$ is $\KACI(\eta)$.
\end{definition}

\begin{proposition}
    \label{prop:kaci-maci}
    If $\mu$ a probability measure on $\RR^{n(n + 1)/2}$ is $\KACI(\eta)$, then $\symmat(\mu)$, i.e., the law of $\symmat(\ba)$ for $\ba \in \mu$, is $\MACI(\eta)$.
\end{proposition}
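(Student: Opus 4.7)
The plan is to chase through the definitions using the identification between $\symvec$ and $\symmat$, and then close the gap between Frobenius and nuclear norms via a Cauchy--Schwarz estimate on the eigenvalues.

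First, fix any test matrix $\bX \in \RR^{n \times n}_{\sym}$ and set $\bx \colonequals \symvec(\bX) \in \RR^{n(n+1)/2}$. For $\bA = \symmat(\ba)$ with $\ba \sim \mu$, the definition of $\symvec$ recorded in the Notation subsection gives the inner-product identity $\langle \bX, \bA \rangle = \langle \symvec(\bX), \symvec(\bA) \rangle = \langle \bx, \ba \rangle$. Applying the hypothesis that $\mu$ is $\KACI(\eta)$ to the vector $\bx$ then yields
\begin{equation}
    \Ex_{\bA \sim \symmat(\mu)} |\langle \bX, \bA \rangle| \;=\; \Ex_{\ba \sim \mu} |\langle \bx, \ba \rangle| \;\geq\; \eta \|\bx\|_2.
\end{equation}

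Second, I would observe that $\symvec$ is an isometry between $(\RR^{n \times n}_{\sym}, \|\cdot\|_F)$ and $(\RR^{n(n+1)/2}, \|\cdot\|_2)$: indeed, directly from the definition,
\begin{equation}
    \|\symvec(\bX)\|_2^2 \;=\; \sum_{i} X_{ii}^2 + 2\sum_{i<j} X_{ij}^2 \;=\; \|\bX\|_F^2.
\end{equation}
So the bound above becomes $\Ex |\langle \bX, \bA \rangle| \geq \eta \|\bX\|_F$.

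Third, to convert this Frobenius-norm lower bound into the nuclear-norm lower bound demanded by the MACI definition, I would use Cauchy--Schwarz on the eigenvalues of $\bX$: writing $\lambda_1, \dots, \lambda_n$ for its eigenvalues,
\begin{equation}
    \|\bX\|_* \;=\; \sum_{i=1}^n |\lambda_i| \;\leq\; \sqrt{n}\,\Bigl(\sum_{i=1}^n \lambda_i^2\Bigr)^{1/2} \;=\; \sqrt{n}\,\|\bX\|_F,
\end{equation}
so $\|\bX\|_F \geq \tfrac{1}{\sqrt{n}} \|\bX\|_*$. Combining the two displays gives $\Ex|\langle \bX, \bA\rangle| \geq \tfrac{\eta}{\sqrt{n}}\|\bX\|_*$, which is exactly the MACI$(\eta)$ condition; since $\bX$ was arbitrary, the proof is complete.

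There is no real obstacle here: the proposition is essentially a translation statement, and the only quantitative ingredient is the standard inequality $\|\bX\|_* \le \sqrt{n}\|\bX\|_F$, which is what produces the $1/\sqrt{n}$ factor appearing in the MACI definition. This also clarifies why the MACI definition is normalized with $1/\sqrt{n}$: it is precisely the factor needed so that a Khintchine-type anti-concentration bound on the $\symvec$ coordinates carries over without loss.
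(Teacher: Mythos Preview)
Your proof is correct and follows exactly the paper's approach: the paper's proof is a one-line remark that the statement ``follows immediately from the definitions along with the inequality $\|\bX\|_F \geq \frac{1}{\sqrt{n}}\|\bX\|_*$,'' and you have simply unpacked that remark in full detail.
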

\begin{proof}
    This follows immediately from the definitions along with the inequality $\|\bX\|_F \geq \frac{1}{\sqrt{n}}\|\bX\|_*$ for any $\bX \in \RR^{n \times n}_{\sym}$.
\end{proof}

\begin{proposition}
    \label{prop:kaci-trunc}
    Suppose $\mu$ is a probability measure on $\RR^n$ that is $\KACI(\eta)$ and
    whose covariance matrix has operator norm at most $C > 0$.
    Let $E$ be an event such that $\eta - \sqrt{C (1 - \mu(E))} > 0$.
    Let $\mu^{\prime}$ be either $\mu$ conditioned on $E$, or the measure from which we sample by sampling $\ba \sim \mu$ and observing $\ba$ if $E$ occurs and $\bm 0$ otherwise (a truncation of $\mu$ to the event $E$).
    Then, $\mu^{\prime}$ is $\KACI(\eta - \sqrt{C(1 - \mu(E))})$.
\end{proposition}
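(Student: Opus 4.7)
\medskip

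\noindent\textbf{Proof plan.}
The plan is to control the contribution of the complementary event $E^c$ to the expectation $\Ex_{\ba \sim \mu}|\langle \bx, \ba \rangle|$ via Cauchy--Schwarz, so that the assumed Khintchine bound for $\mu$ transfers to a Khintchine bound for $\mu'$ up to the promised additive loss $\sqrt{C(1-\mu(E))}$.

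The key inequality is the splitting
\begin{equation*}
    \Ex_{\ba \sim \mu}|\langle \bx, \ba \rangle| \; = \; \Ex_{\ba \sim \mu}|\langle \bx, \ba \rangle|\One_E \; + \; \Ex_{\ba \sim \mu}|\langle \bx, \ba \rangle|\One_{E^c}.
\end{equation*}
I would first apply Cauchy--Schwarz to the second term to obtain
\begin{equation*}
    \Ex_{\ba \sim \mu}|\langle \bx, \ba \rangle|\One_{E^c} \; \leq \; \bigl(\Ex_{\ba \sim \mu}\langle \bx, \ba \rangle^2\bigr)^{1/2}\,(1-\mu(E))^{1/2},
\end{equation*}
and then use the hypothesis that $\Cov[\ba]$ (interpreted throughout as the second moment matrix $\Ex\,\ba\ba^{\top}$, which agrees with the centered covariance in the applications where $\mu$ is centered) has operator norm at most $C$, giving $\Ex\langle \bx, \ba \rangle^2 \leq C\|\bx\|_2^2$. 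Combined with the $\KACI(\eta)$ hypothesis on $\mu$, this yields
\begin{equation*}
    \Ex_{\ba \sim \mu}|\langle \bx, \ba \rangle|\One_E \; \geq \; \bigl(\eta - \sqrt{C(1-\mu(E))}\bigr)\,\|\bx\|_2,
\end{equation*}
which is exactly the required bound in the truncation case, since a sample from the truncated $\mu'$ is $\ba\One_E$ and therefore $\Ex_{\ba' \sim \mu'}|\langle \bx, \ba' \rangle| = \Ex_{\ba \sim \mu}|\langle \bx, \ba \rangle|\One_E$.

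For the conditional version, I would simply note that $\mu'(\cdot) = \mu(\cdot \cap E)/\mu(E)$, so that $\Ex_{\ba' \sim \mu'}|\langle \bx, \ba' \rangle| = \mu(E)^{-1}\Ex_{\ba \sim \mu}|\langle \bx, \ba \rangle|\One_E$. Since $\mu(E) \leq 1$ and since we are assuming $\eta - \sqrt{C(1-\mu(E))} > 0$ so that the lower bound from the previous step is nonnegative, dividing by $\mu(E)$ can only make the bound larger, and the same constant $\eta - \sqrt{C(1-\mu(E))}$ works. There is no real obstacle to this proof; the only subtlety is being careful about the interpretation of ``covariance,'' which in our applications is unambiguous because the relevant measures are symmetric about the origin.
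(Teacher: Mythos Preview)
Your proposal is correct and follows essentially the same route as the paper: split off the $E^c$ contribution, bound it by Cauchy--Schwarz and the operator norm hypothesis, apply the $\KACI(\eta)$ bound to the whole expectation, and then observe that dividing by $\mu(E)\le 1$ only improves the conditional case. Your remark about interpreting ``covariance'' as the second-moment matrix is exactly the same caveat implicit in the paper's bound $\bx^{\top}\Cov(\mu)\bx \le C\|\bx\|_2^2$.
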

\begin{proof}
    Suppose $\mu^{\prime}$ is the truncation of $\mu$ to the event $E$.
    We compute
    \begin{align*}
        \Ex_{\ba \sim \mu^{\prime}}|\langle \bx, \ba \rangle|
        &= \Ex_{\ba \sim \mu}|\langle \bx, \ba \rangle|\, \One\{E\} \\
        &= \Ex_{\ba \sim \mu}|\langle \bx, \ba \rangle| - \Ex_{\ba \sim \mu}|\langle \bx, \ba \rangle|\, \One\{E^c\}
        \intertext{and now bound the first term by the $\KACI(\eta)$ assumption and the second by the Cauchy-Schwarz inequality}
        &\geq \eta \|\bx\|_2 - \sqrt{\mu(E^c)\Ex_{\ba \sim \mu}\langle \bx, \ba \rangle^2 }
        \intertext{and bound the remaining expectation as $\bx^{\top} \Cov(\mu) \bx \leq C\|\bx\|_2^2$, obtaining}
        &\geq \left(\eta - \sqrt{C\, \mu(E^c)}\right)\|\bx\|_2, \numberthis
    \end{align*}
    completing the proof.
    The same holds if $\mu^{\prime}$ is $\mu$ conditioned on $E$ instead; indeed, we have
    \begin{equation}
    \Ex_{\ba \sim \mu}\big[|\langle \bx, \ba \rangle| \,\big|\, E\big] = \frac{\EE_{\ba \sim \mu} |\langle \bx, \ba \rangle| \,\One\{E\}}{\mu(E)} \geq \Ex_{\ba \sim \mu} |\langle \bx, \ba \rangle| \,\One\{E\},
    \end{equation}
    and we may follow the same argument from here.
\end{proof}

\subsection{General Analysis: Proof of Theorem~\ref{thm:mhc}}

The following preliminary result specific to our setting is simple to verify.
\begin{proposition}
    \label{prop:cosh-minus-I}
    Let $\bX \in \RR^{n \times n}_{\sym}$ and $\widetilde{\bX} \colonequals \cosh(\bX) - \bm I$.
    Then, the following hold.
    \begin{enumerate}
        \item $\rank(\widetilde{\bX}) = \rank(\bX)$.
        \item $\widetilde{\bX} \succeq \bm 0$.
        \item $\|\widetilde{\bX}\| = \cosh(\|\bX\|) - 1$.
    \end{enumerate}
\end{proposition}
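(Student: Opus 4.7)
The plan is to diagonalize $\bX$ and then read off all three properties from the spectrum of the resulting diagonal matrix, since $\cosh$ applied to a symmetric matrix acts eigenvalue-wise (by the proposition on matrix exponentials in Section~\ref{sec:linalg}, together with Definition~\ref{def:matrix-hyperbolic}).

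First I would write $\bX = \bU \bD \bU^\top$ with $\bD = \diag(\lambda_1, \dots, \lambda_n)$, whence
\begin{equation*}
    \widetilde{\bX} = \cosh(\bX) - \bm I = \bU\bigl(\cosh(\bD) - \bm I\bigr)\bU^\top = \bU \diag\bigl(\cosh(\lambda_1) - 1, \dots, \cosh(\lambda_n) - 1\bigr) \bU^\top.
\end{equation*}
In particular $\widetilde{\bX}$ has the same eigenvectors as $\bX$, with eigenvalues $\cosh(\lambda_i) - 1$.

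From here each claim is a one-line consequence of an elementary property of the real function $t \mapsto \cosh(t) - 1$. For part~1, the identity $\cosh(t) - 1 = 0 \iff t = 0$ shows that $\widetilde{\bX}$ and $\bX$ have the same set of zero eigenvalues (counted with multiplicity), so $\rank(\widetilde{\bX}) = \rank(\bX)$. For part~2, $\cosh(t) \geq 1$ for every $t \in \RR$, so all eigenvalues of $\widetilde{\bX}$ are nonnegative and hence $\widetilde{\bX} \succeq \bm 0$. For part~3, since $\cosh(t) - 1 \geq 0$ and $\cosh$ is even and increasing on $[0, \infty)$,
\begin{equation*}
    \|\widetilde{\bX}\| = \max_i \bigl(\cosh(\lambda_i) - 1\bigr) = \cosh\bigl(\max_i |\lambda_i|\bigr) - 1 = \cosh(\|\bX\|) - 1.
\end{equation*}

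There is no real obstacle: the whole proof is mechanical once the spectral decomposition is in place, and the statement is included mostly as a convenient bookkeeping lemma for the potential function analysis to follow. I would therefore keep the proof short, perhaps even presenting the three points as bullet items after setting up the diagonalization.
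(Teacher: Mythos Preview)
Your proof is correct and is exactly the verification the paper has in mind; the paper in fact states the proposition without proof, remarking only that it ``is simple to verify,'' and your spectral-decomposition argument is the natural (and essentially unique) way to carry this out.
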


\begin{proof}[Proof of Theorem~\ref{thm:mhc}]
Denote by $\phi(\bM) \colonequals \Tr\cosh(\alpha \bM)$ the potential function in Algorithm~\ref{alg:mhc}, where $\alpha > 0$ is a parameter of the algorithm that we will choose at the end.
Following the proof strategy of \cite{BS-2020-OnlineBalancingRandom,JKS-2019-OnlineDiscrepancyStochasticEnvy,BJSS-2020-OnlineVectorBalancingDiscrepancy,BJMSS-2021-OnlineDiscrepancyStochastic}, we will first prove the following claims.

\vspace{1em}

\noindent
\textbf{Claim 1:} If $\phi(\bM_{t - 1}) \leq 2n$, then
    \begin{equation}
        \EE_{\bA_t}[\phi(\bM_t)] - \phi(\bM_{t - 1}) \lesssim_{\eta, \theta} \frac{1}{n}.
    \end{equation}

\noindent
\textbf{Claim 2:} If $\phi(\bM_{t - 1}) \geq 2n$, then
    \begin{equation}
        \EE_{\bA_t}[\phi(\bM_t)] - \phi(\bM_{t - 1}) \leq 0.
    \end{equation}

Let us write
\begin{equation}
    \bM_t \colonequals \sum_{i = 1}^t x_i\bA_i.
\end{equation}
Recall that Algorithm~\ref{alg:mhc} chooses
\begin{equation}
    x_t = \argmin_{x \in \{\pm 1\}} \left\{ \phi(\bM_{t - 1} +x\bA_t)\right\} = \argmin_{x \in \{\pm 1\}} \left\{ \phi(\bM_{t - 1} +x\bA_t) - \phi(\bM_{t - 1})\right\}.
\end{equation}
Using this, we analyze the increase in $\phi(\bM_t)$ over one time step as
\begin{align*}
    &\hspace{-0.5cm}\phi(\bM_t) - \phi(\bM_{t - 1}) \\
   &= \phi(\bM_{t - 1} +x_t\bA_t) - \phi(\bM_{t - 1}) \\
   &= \Tr \cosh(\alpha \bM_{t - 1} + \alpha x_t \bA_t) - \Tr\cosh(\alpha \bM_{t - 1})
   \intertext{and by Corollary~\ref{cor:cosh-subadditivity}}
   &\leq  \Tr \cosh(\alpha \bM_{t - 1}) \cosh(\alpha x_t \bA_t) - \Tr\cosh(\alpha \bM_{t - 1}) + \Tr \sinh(\alpha \bM_{t - 1}) \sinh(\alpha x_t \bA_t) \\
   &= \Tr \cosh(\alpha \bM_{t - 1}) (\cosh(\alpha x_t \bA_t) - \bm I) + \Tr \sinh(\alpha \bM_{t - 1}) \sinh(\alpha x_t \bA_t) \\
   &= \langle \cosh(\alpha \bM_{t - 1}), \cosh(\alpha x \bA_t) - \bm I \rangle + \langle \sinh(\alpha \bM_{t - 1}), \sinh(\alpha x_t \bA_t) \rangle
   \intertext{where, since $x_t \in \{\pm 1\}$, $\cosh(-\bY) = \cosh(\bY)$, and $\sinh(-\bY) = -\sinh(\bY)$, the algorithm's choice of $x_t$ will achieve}
   &= \langle \cosh(\alpha \bM_{t - 1}), \cosh(\alpha \bA_t) - \bm I \rangle - |\langle \sinh(\alpha \bM_{t - 1}), \sinh(\alpha \bA_t) \rangle|. \label{eq:cosh-sinh-decomp} \numberthis
\end{align*}

Since $\cosh(\bX) - \bm I \succeq \bm 0$ for any $\bX$ (Proposition~\ref{prop:cosh-minus-I}), we have for the first term of \eqref{eq:cosh-sinh-decomp}
\begin{align*}
    \Ex_{\bA_t} \langle \cosh(\alpha \bM_{t - 1}), \cosh(\alpha \bA_t) - \bm I \rangle &= \langle \cosh(\alpha \bM_{t - 1}), \EE[\cosh(\alpha \bA_t) - \bm I] \rangle \\
    &\leq \langle \cosh(\alpha \bM_{t - 1}), \EE[\|\cosh(\alpha \bA_t) - \bm I\| \bP_{\row(\bA_t)}] \rangle \\
    &\leq (\cosh(\alpha) - 1) \langle \cosh(\alpha \bM_{t - 1}), \EE \bP_{\row(\bA_t)} \rangle \\
    &\leq (\cosh(\alpha) - 1) \|\EE \bP_{\row(\bA_t)} \| \, \Tr \cosh(\alpha \bM_{t - 1}) \\
    &= (\cosh(\alpha) - 1) \theta \frac{r}{n} \, \phi(\bM_{t - 1})
\end{align*}
For the bound of Claim 1, we simply discard the second term and use
\begin{align*}
    \Ex_{\bA_t}[\phi(\bM_t)] - \phi(\bM_{t - 1})
    &\leq \langle \cosh(\alpha \bM_{t - 1}), \cosh(\alpha \bA_t) - \bm I \rangle \\
    &\leq (\cosh(\alpha) - 1) \theta \frac{r}{n} \, \phi(\bM_{t - 1}) \\
    &\leq 2(\cosh(\alpha) - 1) \theta r \\
    &\leq 2\theta \alpha^2 r \\
    &\lesssim \frac{1}{n}, \numberthis
\end{align*}
so long as we choose $\alpha \lesssim 1 / \sqrt{rn}$.

For the bound of Claim 2, we must also analyze the second term of \eqref{eq:cosh-sinh-decomp}, for which we may first bound
\begin{align*}
    &|\langle \sinh(\alpha \bM_{t - 1}), \sinh(\alpha \bA_t) \rangle| \\
    &\hspace{2cm} \geq \alpha |\langle \sinh(\alpha \bM_{t - 1}), \bA_t \rangle| - |\langle \sinh(\alpha \bM_{t - 1}), \sinh(\alpha \bA_t) - \alpha \bA_t \rangle|. \numberthis \label{eq:sinh-bound}
\end{align*}
So long as we take $\alpha \leq 1$ at the end, we will have for the second term in \eqref{eq:sinh-bound} that
\begin{align*}
    \Ex_{\bA_t} |\langle \sinh(\alpha \bM_{t - 1}), \sinh(\alpha \bA_t) - \alpha \bA_t \rangle|
    &\leq \Ex_{\bA_t} \langle |\sinh(\alpha \bM_{t - 1})|, \alpha^3 \bP_{\row(\bA_t)} \rangle \\
    &= \langle |\sinh(\alpha \bM_{t - 1})|, \alpha^3 \EE \bP_{\row(\bA_t)} \rangle \\
    &\leq \alpha^3 \|\EE \bP_{\row(\bA_t)}\| \, \|\sinh(\alpha \bM_{t - 1})\|_*
    \intertext{where we have used the facts that $\|\bA_t\| \leq 1$ and $\sinh(x) - x \leq x^3$ for all $x \in [0, 1]$, and now also using that $|\sinh(x)| \leq \cosh(x)$ we may finish}
    &\leq \alpha^3 \theta \frac{r}{n} \, \phi(\bM_{t - 1}). \numberthis
\end{align*}
Finally, for the first term in \eqref{eq:sinh-bound}, by our assumption of $\MACI(\eta)$, we have\footnote{The computations to come are direct matrix analogs of ones appearing in \cite{BS-2020-OnlineBalancingRandom}.}
\begin{align*}
    \Ex_{\bA_t}|\langle \sinh(\alpha \bM_{t - 1}), \bA_t \rangle|
    &\geq \eta \sqrt{\frac{r}{n^3}} \|\sinh(\alpha \bM_{t - 1})\|_* \\
    &= \eta \sqrt{\frac{r}{n^3}} \sum_{i = 1}^n |\sinh(\alpha \lambda_i(\bM_{t - 1}))| \\
    &\geq \eta \sqrt{\frac{r}{n^3}} \sum_{i = 1}^n (\cosh(\alpha \lambda_i(\bM_{t - 1})) - 1) \\
    &= \eta \sqrt{\frac{r}{n^3}}(\phi(\bM_{t - 1}) - n)
    \intertext{and since we assume $\phi(\bM_{t - 1}) \geq 2n$, we have}
    &\geq \frac{\eta}{2}\sqrt{\frac{r}{n^3}} \phi(\bM_{t - 1}). \numberthis
\end{align*}

Putting together all of these bounds, we find, again supposing we choose $\alpha \leq 1$,
\begin{align*}
    \Ex_{\bA_t} \phi(\bM_t) - \phi(\bM_{t - 1})
    &\leq \left(\theta (\cosh(\alpha) - 1) r + \theta\alpha^3 r - \frac{\eta}{\sqrt{2}}\alpha \sqrt{\frac{r}{n}} \right)\, \frac{1}{n} \phi(\bM_{t - 1}) \\
    &\leq \left(2 \theta \alpha^2 r - \frac{\eta}{\sqrt{2}}\alpha \sqrt{\frac{r}{n}} \right)\, \frac{1}{n} \phi(\bM_{t - 1})
    \intertext{and the choice of $\alpha$ given in the Theorem minimizes this, giving}
    &\leq -\frac{1}{16} \frac{\eta^2}{\theta} \frac{1}{n^2} \phi(\bM_{t - 1}) \\
    &\leq 0. \numberthis
\end{align*}

Combining these results, for any value of $\phi(\bM_{t - 1})$ we have
\begin{equation}
    \Ex_{\bA_t}[\phi(\bM_t)] - \phi(\bM_{t - 1}) \lesssim \frac{1}{n}.
\end{equation}
Thus, taking the expectation over $\bA_1, \dots, \bA_{t - 1}$ as well, we find
\begin{equation}
    \EE[\phi(\bM_t) - \phi(\bM_{t - 1})] \leq \frac{1}{n}.
\end{equation}

Let us write
\begin{equation}
    \widetilde{\phi}(\bM) \colonequals \phi(\bM) - n = \Tr(\cosh(\bM) - \bm I) \geq 0.
\end{equation}
Since $\phi(\bM_0) = \phi(\bm 0) = \Tr\cosh(\bm 0) = n$, we have $\widetilde{\phi}(\bM_0) = 0$, and thus for each $t$
\begin{equation}
    \EE \widetilde{\phi}(\bM_t) \lesssim \frac{t}{n} \leq \frac{T}{n}.
\end{equation}
By Markov's inequality,
\begin{equation}
    \PP\left[\widetilde{\phi}(\bM_t) \geq \frac{T^3}{n} \right] \lesssim \frac{1}{T^2},
\end{equation}
and so, taking a union bound, with high probability $\widetilde{\phi}(\bM_t) \leq T^3 / n \leq T^3$ for all $1 \leq t \leq T$.
On the other hand, using Proposition~\ref{prop:cosh-minus-I}, for any $\bM$ we have
\begin{equation}
    \exp(\|\bM\|) - 1 \leq \cosh(\|\bM\|) - 1 = \|\cosh(\bM) - \bm I\| \leq \Tr(\cosh(\bM) - \bm I).
\end{equation}
Therefore, on the above event, we have for all $1 \leq t \leq T$ that
\begin{equation}
    \|\bM_t\| \leq \frac{1}{\alpha} \log (1 + \widetilde{\phi}(\bM_t)) \lesssim_{\eta, \theta} \sqrt{rn} \, \log\left(1 + \frac{T^3}{n}\right),
\end{equation}
completing the proof.
\end{proof}

\begin{rem}
    \label{rem:mhc-improvements}
    As our proof shows, the term $\log(T)$ in Theorem~\ref{thm:mhc} may be replaced by $\log(1 + T^3 / n)$.
    More generally, for any $f(n) = \omega(1)$, $\log(T)$ may be replaced by $\log(1 + T^2 f(n) / n)$, and if we are only interested in $\|\bM_T\|$ rather than $\max_{t = 1}^T \|\bM_t\|$, then this may moreover be replaced by $\log(1 + T f(n) / n)$, since we do not need to take a union bound over $T$ time points.
    When $T \ll n$, then this latter result recovers that the discrepancy scales essentially as $O(T)$, up to an arbitrarily slowly growing function of $n$.
\end{rem}

\subsection{Independent Entries: Proof of Corollary~\ref{cor:mhc-ind}}

We will use the following condition for KACI on vectors with independent coordinates subject to mild regularity conditions.
\begin{lemma}
    \label{lem:kaci-hc}
    Let $\mu$ be a probability measure on $\RR^n$ such that, for $\ba \sim \mu$, we have:
    \begin{enumerate}
    \item The coordinates of $\ba$ are independent,
    \item $\EE [\ba] = \bm 0$,
    \item $\EE [a_i^4] < \infty$ for all $i \in [n]$,
    \item $\EE[a_i^2] \geq C_1^2$ for all $i \in [n]$ for some $C_1 > 0$, and
    \item $\EE[a_i^4] / (\EE[a_i^2])^2 \leq C_2$ for all $i \in [n]$ for some $C_2 > 0$.
    \end{enumerate}
    Then, $\mu$ is $\KACI(\frac{C_1}{512 C_2})$.
\end{lemma}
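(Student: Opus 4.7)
The plan is to prove this Khintchine-style anti-concentration inequality by a standard second moment vs.\ fourth moment comparison (a Paley--Zygmund / Hölder argument), applied to the linear form $S \colonequals \langle \bx, \ba \rangle$.

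First I will compute a lower bound on $\EE[S^2]$. Since the coordinates of $\ba$ are independent and centered,
\begin{equation}
    \EE[S^2] = \sum_{i=1}^n x_i^2 \, \EE[a_i^2] \;\geq\; C_1^2 \|\bx\|_2^2,
\end{equation}
directly from the assumption $\EE[a_i^2] \geq C_1^2$.

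Second, I will compute an upper bound on $\EE[S^4]$. Again using independence and centering, every monomial $x_i x_j x_k x_\ell \EE[a_i a_j a_k a_\ell]$ vanishes unless the indices pair up or all coincide, so
\begin{equation}
    \EE[S^4] = \sum_{i} x_i^4 \,\EE[a_i^4] + 3 \sum_{i \neq j} x_i^2 x_j^2 \,\EE[a_i^2] \EE[a_j^2].
\end{equation}
Applying the hypothesis $\EE[a_i^4] \leq C_2 (\EE[a_i^2])^2$ and noting $C_2 \geq 1$ by Jensen's inequality, both sums are bounded by $3 C_2 (\sum_i x_i^2 \EE[a_i^2])^2 = 3 C_2 (\EE[S^2])^2$.

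Third, I will combine these bounds via the Hölder inequality $\EE[S^2] = \EE[|S|^{2/3} \cdot |S|^{4/3}] \leq (\EE|S|)^{2/3} (\EE[S^4])^{1/3}$, which rearranges to
\begin{equation}
    \EE |S| \;\geq\; \frac{(\EE[S^2])^{3/2}}{(\EE[S^4])^{1/2}} \;\geq\; \frac{(\EE[S^2])^{3/2}}{\sqrt{3 C_2}\,\EE[S^2]} \;=\; \frac{\sqrt{\EE[S^2]}}{\sqrt{3 C_2}} \;\geq\; \frac{C_1}{\sqrt{3 C_2}}\|\bx\|_2.
\end{equation}
Since $C_2 \geq 1$, we have $\sqrt{3 C_2} \leq \sqrt{3}\, C_2 \leq 512 C_2$, which yields the claimed KACI constant (with substantial slack, suggesting the stated constant $\tfrac{1}{512}$ is chosen for cosmetic uniformity rather than tightness).

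There is no real obstacle here: the argument is elementary, and the only step requiring care is the fourth-moment expansion, where one must verify that the cross terms with an odd power of some $a_i$ vanish by independence and mean zero. The constants propagate cleanly, so no additional regularity beyond the stated moment bounds is needed.
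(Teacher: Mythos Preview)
Your proof is correct and in fact yields the sharper constant $\tfrac{C_1}{\sqrt{3C_2}}$, from which the stated $\tfrac{C_1}{512C_2}$ follows trivially since $C_2\ge 1$. The only place where the phrasing is slightly loose is the sentence ``both sums are bounded by $3C_2(\EE[S^2])^2$'': what is true (and what you need) is that the \emph{total} is bounded by $3C_2(\EE[S^2])^2$, obtained by using $C_2\ge 1$ on the cross term and $1\le 3$ on the diagonal term so that the two combine into $3C_2\bigl(\sum_i x_i^2\EE[a_i^2]\bigr)^2$.

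The paper takes a genuinely different route: it invokes hypercontractivity results from \cite{GOWZ-2010-FunctionsHalfspacesProductDistributions} to first show that each $a_i$, and hence the linear form $\langle\ba,\bx\rangle$, is $(C_2^{-1/4}/2\sqrt{3})$-hypercontractive, then deduces the small-ball probability bound $\PP\bigl[|\langle\ba,\bx\rangle|\ge\tfrac12(\EE\langle\ba,\bx\rangle^2)^{1/2}\bigr]\ge\tfrac{1}{256C_2}$, and finally converts this to a first-moment lower bound. Your direct H\"older comparison of the first, second, and fourth moments is more elementary, self-contained, and gives a tighter constant; the paper's approach has the advantage of yielding a pointwise anti-concentration (small-ball) statement along the way, which can be useful in other contexts but is not needed here.
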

\begin{proof}
    The result follows from Fact 3.3, Fact 3.5, and Proposition 3.7 of \cite{GOWZ-2010-FunctionsHalfspacesProductDistributions} (which are themselves standard facts about hypercontractivity): those statements imply that the $a_i$ are (in the language of that paper) all $C_2^{-1/4} / 2\sqrt{3}$-hypercontractive, that $\langle \ba, \bx \rangle$ is $C_2^{-1/4}/ 2\sqrt{3}$-hypercontractive as well, and finally that this implies the anti-concentration inequality
    \begin{equation}
        \PP\left[|\langle \ba, \bx \rangle| \geq \frac{1}{2}\left(\EE \langle \ba, \bx \rangle^2\right)^{1/2} \right] \geq \frac{1}{256 C_2}.
    \end{equation}
    Since the coordinates of $\ba$ are independent and centered, we have $\EE \langle \ba, \bx \rangle^2 = \sum_{i = 1}^n x_i^2 \EE a_i^2 \geq C_1^2\|\bx\|_2^2$.
    Combining these observations, we have
    \begin{equation}
        \EE|\langle \ba, \bx \rangle| \geq \PP\left[|\langle \ba, \bx \rangle| \geq \frac{C_1}{2} \|\bx\| \right] \cdot \frac{C_1}{2} \|\bx\| \geq \frac{C_1}{512 C_2} \|\bx\|,
    \end{equation}
    completing the proof.
\end{proof}

\begin{proof}[Proof of Corollary~\ref{cor:mhc-ind}]
We will apply Theorem~\ref{thm:mhc} to the setup specified in the Corollary.
Let us check the three conditions on $\mu_n$.
First, $\mu_n$ is supported on matrices of operator norm at most 1 by construction.
Second, since we have taken rank sequence $r(n) = n$, $\mu_n$ is vacuously 1-unbiased.

It remains to check that $\sqrt{n} \mu$ is $\MACI(\eta)$ for some constant $\eta > 0$ depending only on the constants $C_i$ in the statement.
By Proposition~\ref{prop:kaci-maci}, it suffices to show instead that $\sqrt{n} \symvec(\mu)$ is $\KACI(\eta)$.
That $\sqrt{n}\symvec(\widetilde{\mu})$ is $\KACI(\eta)$ for some $\eta = \eta(C_1, C_3) > 0$ follows immediately from Lemma~\ref{lem:kaci-hc}, since the assumptions on $\widetilde{\mu}$ in the Corollary subsume the assumptions of the Lemma.
To conclude, we apply Proposition~\ref{prop:kaci-trunc}: since $\mu_n$ is $\widetilde{\mu}_n$ conditioned on $\|\bA\| \leq 1$, and the covariance matrix of $\sqrt{n} \widetilde{\mu}_n$ has norm at most $C_2^2$ by assumption, we find that $\mu_n$ is
\begin{equation}
    \KACI\left(\eta - C_2 \sqrt{\widetilde{\mu}_n(\{\|\bA\| > 1\})}\right).
\end{equation}
For sufficiently large $n$, since $\|\bA\| \leq 1$ with high probability, this quantity will be at least, say, $\eta / 2$, so $\mu_n$ for sufficiently large $n$ is $\KACI(\eta / 2)$.

Thus, Theorem~\ref{thm:mhc} applies to the $\mu_n$, and the proof is complete.
\end{proof}

\subsection{Wishart Matrices: Proof of Corollary~\ref{cor:wishart}}

We will use the following condition on KACI for vectors formed by taking tensor powers of Gaussian vectors.
\begin{proposition}
    \label{prop:poly-kaci}
    Let $\mu$ be the law of the vector of degree $d$ monomials in the entries of $\bg \sim \sN(\bm 0, \bm I_n)$.
    Then, $\mu$ is $\KACI(\eta_d)$ for an absolute constant $\eta_d > 0$ (not depending on $n$).
\end{proposition}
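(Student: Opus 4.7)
Fix $\bx$ indexed by multi-indices $\alpha \in \NN^n$ with $|\alpha| = d$, and define
\begin{equation}
    p(\bg) \colonequals \langle \bx, \ba(\bg) \rangle = \sum_{|\alpha| = d} x_\alpha \, \bg^\alpha,
\end{equation}
a homogeneous polynomial of degree $d$ in the entries of $\bg \sim \sN(\bm 0, \bm I_n)$. The goal is to prove $\EE |p(\bg)| \geq \eta_d \|\bx\|_2$ for a constant $\eta_d > 0$ independent of $n$. The plan is to prove this in two steps: first a comparison $\|p\|_2 \geq \|\bx\|_2$ between $L^2(\sN(\bm 0, \bm I_n))$ and coefficient norms, and then an $L^1$-$L^2$ comparison $\|p\|_1 \geq c_d \|p\|_2$ coming from Gaussian hypercontractivity.

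For the first step, I would expand $p$ in the orthogonal basis of multivariate probabilist's Hermite polynomials $He_\alpha(\bg) \colonequals \prod_{i=1}^n He_{\alpha_i}(g_i)$, which satisfy $\EE He_\alpha(\bg) He_\beta(\bg) = \alpha!\, \mathbbm{1}\{\alpha = \beta\}$ (where $\alpha! \colonequals \prod_i \alpha_i!$). Since each single-variable monomial $g_i^{k}$ equals $He_k(g_i)$ plus a linear combination of $He_j(g_i)$ with $j < k$ of the same parity, multiplying across coordinates gives
\begin{equation}
    \bg^\alpha = He_\alpha(\bg) + (\text{multivariate Hermite polynomials of total degree } < |\alpha|).
\end{equation}
Substituting and using orthogonality across Wiener chaoses, the projection of $p$ onto the degree-$d$ chaos is exactly $\sum_{|\alpha|=d} x_\alpha He_\alpha(\bg)$, so
\begin{equation}
    \EE p(\bg)^2 \geq \sum_{|\alpha|=d} x_\alpha^2 \, \alpha! \geq \sum_{|\alpha|=d} x_\alpha^2 = \|\bx\|_2^2,
\end{equation}
where the second inequality uses $\alpha! \geq 1$. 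This gives $\|p\|_2 \geq \|\bx\|_2$, with a constant that is manifestly independent of $n$.

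For the second step, I would invoke Nelson's Gaussian hypercontractivity: every polynomial of degree at most $d$ in iid standard Gaussians satisfies $\|p\|_4 \leq 3^{d/2} \|p\|_2$, uniformly in the number of variables. Combining this with the interpolation inequality $\EE p^2 \leq (\EE |p|)^{2/3} (\EE p^4)^{1/3}$ (H\"{o}lder with exponents $3/2$ and $3$) yields
\begin{equation}
    \EE |p(\bg)| \geq \frac{(\EE p^2)^{3/2}}{(\EE p^4)^{1/2}} \geq \frac{\|p\|_2}{3^d}.
\end{equation}
Chaining the two bounds gives $\EE |p(\bg)| \geq 3^{-d} \|\bx\|_2$, so the proposition holds with $\eta_d = 3^{-d}$.

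There is no serious obstacle here: both ingredients are standard and both constants are dimension-free. The only point requiring a bit of care is the Hermite expansion of $\bg^\alpha$, where one must track that the top-chaos component is $He_\alpha$ with coefficient exactly $1$, so that the coefficients $x_\alpha$ transfer directly into the $L^2$ lower bound; everything else reduces to invoking Nelson's inequality and H\"{o}lder.
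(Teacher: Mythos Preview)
Your proof is correct and self-contained: the Hermite-expansion step cleanly gives $\|p\|_2 \ge \|\bx\|_2$, and the hypercontractivity plus H\"older interpolation is the standard Paley--Zygmund route to $\|p\|_1 \gtrsim_d \|p\|_2$. The paper does not actually prove this proposition---it simply cites standard Gaussian polynomial anti-concentration (Lovett 2010)---so your argument is a fleshed-out version of what the paper defers to the literature, with the added benefit of an explicit constant $\eta_d = 3^{-d}$.
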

\noindent
This follows from standard polynomial anti-concentration results; see, e.g., \cite{Lovett-2010-GaussianPolynomialAntiConcentration}.

\begin{proof}[Proof of Corollary~\ref{cor:wishart}]
Again, we will apply Theorem~\ref{thm:mhc} to the setup in the Corollary.
First, $\mu_n$ is again supported on matrices of operator norm at most 1 by construction.

The unbiasedness condition is now non-trivial to check: by rotational invariance of Gaussian random matrices, we have that $\row(\bA)$ is a uniformly random $r(n)$-dimensional subspace of $\RR^n$.
Thus, $\bP_{\row(\bA)}$ has the same law as $\bU\bU^{\top}$ where $\bU$ consists of the first $r(n)$ columns of a Haar-distributed $n \times n$ orthogonal matrix.
Basic moment computations for the Haar measure (e.g., Lemma 3.3 of \cite{Meckes-2006-Thesis}) then gives
\begin{equation}
    \EE \bP_{\row(\bA)} = \frac{r(n)}{n} \bm I_n,
\end{equation}
so $\mu_n$ is 1-unbiased with the rank sequence $r(n)$.

Finally we must check the MACI condition, which we again do by checking the KACI condition and invoking Proposition~\ref{prop:kaci-maci}.
Standard results on rectangular Gaussian matrices (see, e.g., Section~7.3 of \cite{Vershynin-2018-HDP}) imply that, with high probability,
\begin{equation}
    \|\bG\| \leq 2(\sqrt{r} + \sqrt{n}) \leq 4\sqrt{n}.
\end{equation}
We have
\begin{align*}
    \EE \left|\left\langle \bX, \frac{n}{\sqrt{r}}\frac{1}{\|\bG\|^2}\bG\bG^{\top} \right \rangle\right|
    &\geq \frac{n}{\sqrt{r}}\EE \left|\left\langle \bX, \frac{1}{\|\bG\|^2}\bG\bG^{\top} \right \rangle\right| \One\{\|\bG\| \leq 4\sqrt{n} \} \\
    &\geq \frac{1}{16\sqrt{r}}\EE |\langle \bX, \bG\bG^{\top}\rangle| \One\{\|\bG\| \leq 4\sqrt{n} \}. \numberthis
\end{align*}

Let us write $\bh \in \RR^{rn(rn + 1)/2}$ for the vector containing each degree 2 monomial in the entries of $\bG$.
By Proposition~\ref{prop:poly-kaci}, the law of $\bh$ is $\KACI(\eta)$ for some $\eta > 0$.
By Proposition~\ref{prop:kaci-trunc}, the law of $\bh$ truncated to the high-probability event that $\|\bG\| \leq 4\sqrt{n}$ is $\KACI(\eta / 2)$ for sufficiently large $n$ (noting that the covariance matrix of $\bh$ is diagonal with entries bounded by an absolute constant).
Thus, we have
\begin{align*}
    \EE |\langle \bX, \bG\bG^{\top}\rangle|\One\{\|\bG\| \leq 4\sqrt{n}\}
    &= \EE \left| \sum_{i, j = 1}^n \sum_{k = 1}^r  G_{ik}G_{jk}X_{ij}\right|\One\{\|\bG\| \leq 4\sqrt{n}\} \\
    &\geq \frac{\eta}{2}\sqrt{r}\|\bX\|_F, \numberthis
\end{align*}
and thus $\frac{n}{\sqrt{r}}\mu_n$ is $\KACI(\eta / 32)$.
Thus, Theorem~\ref{thm:mhc} applies, completing the proof.
\end{proof}

\addcontentsline{toc}{section}{Acknowledgments}
\section*{Acknowledgments}

We thank Afonso Bandeira, David Gamarnik, Antoine Maillard, and Daniel Spielman for helpful discussions.
We especially thank Peng Zhang for pointing out an important error in an early version of this paper, leading to a substantial revision of Theorem~\ref{thm:lb} and its application in Corollary~\ref{cor:lb-proj}.
We also thank an anonymous reviewer for suggesting the proof strategy leading to Corollary~\ref{cor:ub-general}.

\addcontentsline{toc}{section}{References}
\bibliographystyle{alpha}
\bibliography{main.bib}

\newcommand{\etalchar}[1]{$^{#1}$}
\begin{thebibliography}{BJM{\etalchar{+}}22b}

\bibitem[AGZ10]{AGZ-2010-RandomMatrices}
Greg~W Anderson, Alice Guionnet, and Ofer Zeitouni.
\newblock {\em An introduction to random matrices}.
\newblock Cambridge University Press, 2010.

\bibitem[ALPTJ11]{ALPTJ-2011-RIPIndependentColumns}
Radoslaw Adamczak, Alexander~E Litvak, Alain Pajor, and Nicole
  Tomczak-Jaegermann.
\newblock Restricted isometry property of matrices with independent columns and
  neighborly polytopes by random sampling.
\newblock {\em Constructive Approximation}, 34(1):61--88, 2011.

\bibitem[ALS21]{ALS-2021-DiscrepancySelfBalancingWalk}
Ryan Alweiss, Yang~P Liu, and Mehtaab Sawhney.
\newblock Discrepancy minimization via a self-balancing walk.
\newblock In {\em Proceedings of the 53rd Annual ACM SIGACT Symposium on Theory
  of Computing}, pages 14--20, 2021.

\bibitem[ALS22]{ALS-2022-ContiguityLimitSymmetricPerceptron}
Emmanuel Abbe, Shuangping Li, and Allan Sly.
\newblock Proof of the contiguity conjecture and lognormal limit for the
  symmetric perceptron.
\newblock In {\em 2021 IEEE 62nd Annual Symposium on Foundations of Computer
  Science (FOCS)}, pages 327--338. IEEE, 2022.

\bibitem[Alt22]{Altschuler-2022-CriticalWindowSymmetricPerceptron}
Dylan~J Altschuler.
\newblock Critical window of the symmetric perceptron.
\newblock {\em arXiv preprint arXiv:2205.02319}, 2022.

\bibitem[ANW21]{ANW-2021-DiscrepancyRandomRectangular}
Dylan~J Altschuler and Jonathan Niles-Weed.
\newblock The discrepancy of random rectangular matrices.
\newblock {\em arXiv preprint arXiv:2101.04036}, 2021.

\bibitem[APZ19]{APZ-2019-CapacitySymmetricBinaryPerceptron}
Benjamin Aubin, Will Perkins, and Lenka Zdeborov{\'a}.
\newblock Storage capacity in symmetric binary perceptrons.
\newblock {\em Journal of Physics A: Mathematical and Theoretical},
  52(29):294003, 2019.

\bibitem[AW02]{AW-2002-IdentificationQuantumChannels}
Rudolf Ahlswede and Andreas Winter.
\newblock Strong converse for identification via quantum channels.
\newblock {\em IEEE Transactions on Information Theory}, 48(3):569--579, 2002.

\bibitem[BAG97]{BAG-1997-WignerLargeDeviations}
Gerard Ben~Arous and Alice Guionnet.
\newblock Large deviations for {Wigner}'s law and {Voiculescu}'s
  non-commutative entropy.
\newblock {\em Probability theory and related fields}, 108(4):517--542, 1997.

\bibitem[Ban10]{Bansal-2010-ConstructiveDiscrepancyMinimization}
Nikhil Bansal.
\newblock Constructive algorithms for discrepancy minimization.
\newblock In {\em 2010 IEEE 51st Annual Symposium on Foundations of Computer
  Science}, pages 3--10. IEEE, 2010.

\bibitem[B{\'a}r08]{Barany-2008-LinearDependencies}
Imre B{\'a}r{\'a}ny.
\newblock On the power of linear dependencies.
\newblock {\em Building Bridges: Between Mathematics and Computer Science},
  pages 31--45, 2008.

\bibitem[BDG19]{BDG-2019-AlgorithmKomlosBanaszczyk}
Nikhil Bansal, Daniel Dadush, and Shashwat Garg.
\newblock An algorithm for {K}oml{\'o}s conjecture matching {Banaszczyk's}
  bound.
\newblock {\em SIAM Journal on Computing}, 48(2):534--553, 2019.

\bibitem[BDGL18]{BDGL-2018-GramSchmidtWalk}
Nikhil Bansal, Daniel Dadush, Shashwat Garg, and Shachar Lovett.
\newblock The {Gram-Schmidt} walk: a cure for the {Banaszczyk} blues.
\newblock In {\em Proceedings of the 50th Annual ACM SIGACT Symposium on Theory
  of Computing}, pages 587--597, 2018.

\bibitem[BF81]{BF-1981-IntegerMakingTheorems}
J{\'o}zsef Beck and Tibor Fiala.
\newblock ``integer-making'' theorems.
\newblock {\em Discrete Applied Mathematics}, 3(1):1--8, 1981.

\bibitem[Bha13]{Bhatia-2013-MatrixAnalysis}
Rajendra Bhatia.
\newblock {\em Matrix analysis}, volume 169.
\newblock Springer Science \& Business Media, 2013.

\bibitem[BJM{\etalchar{+}}21a]{BJMSS-2021-OnlineDiscrepancyStochastic}
Nikhil Bansal, Haotian Jiang, Raghu Meka, Sahil Singla, and Makrand Sinha.
\newblock Online discrepancy minimization for stochastic arrivals.
\newblock In {\em Proceedings of the 2021 ACM-SIAM Symposium on Discrete
  Algorithms (SODA)}, pages 2842--2861. SIAM, 2021.

\bibitem[BJM{\etalchar{+}}21b]{BJMSS-2021-PrefixDiscrepancySmoothed}
Nikhil Bansal, Haotian Jiang, Raghu Meka, Sahil Singla, and Makrand Sinha.
\newblock Prefix discrepancy, smoothed analysis, and combinatorial vector
  balancing.
\newblock {\em arXiv preprint arXiv:2111.07049}, 2021.

\bibitem[BJM22a]{BJM-2022-MatrixSpencerSparse}
Nikhil Bansal, Haotian Jiang, and Raghu Meka.
\newblock Resolving matrix {Spencer} conjecture up to poly-logarithmic rank.
\newblock {\em arXiv preprint arXiv:2208.11286}, 2022.

\bibitem[BJM{\etalchar{+}}22b]{BJMSS-2022-SmoothedAnalysisKomlos}
Nikhil Bansal, Haotian Jiang, Raghu Meka, Sahil Singla, and Makrand Sinha.
\newblock Smoothed analysis of the {K}oml{\'o}s conjecture.
\newblock In {\em 49th International Colloquium on Automata, Languages, and
  Programming (ICALP 2022)}. Schloss Dagstuhl-Leibniz-Zentrum f{\"u}r
  Informatik, 2022.

\bibitem[BJSS20]{BJSS-2020-OnlineVectorBalancingDiscrepancy}
Nikhil Bansal, Haotian Jiang, Sahil Singla, and Makrand Sinha.
\newblock Online vector balancing and geometric discrepancy.
\newblock In {\em Proceedings of the 52nd Annual ACM SIGACT Symposium on Theory
  of Computing}, pages 1139--1152, 2020.

\bibitem[BLV22]{BLV-2022-UnifiedDiscrepancyMinimization}
Nikhil Bansal, Aditi Laddha, and Santosh~S Vempala.
\newblock A unified approach to discrepancy minimization.
\newblock {\em arXiv preprint arXiv:2205.01023}, 2022.

\bibitem[BS20]{BS-2020-OnlineBalancingRandom}
Nikhil Bansal and Joel~H Spencer.
\newblock On-line balancing of random inputs.
\newblock {\em Random Structures \& Algorithms}, 57(4):879--891, 2020.

\bibitem[BVH16]{BvH-2016-RandomMatrixIndependentEntries}
Afonso~S Bandeira and Ramon Van~Handel.
\newblock Sharp nonasymptotic bounds on the norm of random matrices with
  independent entries.
\newblock 2016.

\bibitem[Cha00]{Chazelle-2000-DiscrepancyMethod}
Bernard Chazelle.
\newblock {\em The discrepancy method: randomness and complexity}.
\newblock Cambridge University Press, 2000.

\bibitem[Cos09]{Costello-2009-BalancingGaussianVectors}
Kevin~P Costello.
\newblock Balancing gaussian vectors.
\newblock {\em Israel Journal of Mathematics}, 172(1):145--156, 2009.

\bibitem[DGLN19]{DGLN-2019-TowardsConstructiveBanaszczyk}
Daniel Dadush, Shashwat Garg, Shachar Lovett, and Aleksandar Nikolov.
\newblock Towards a constructive version of {Banaszczyk's} vector balancing
  theorem.
\newblock {\em Theory of Computing}, 15(1):1--58, 2019.

\bibitem[DJR21]{DJR-2021-MatrixDiscrepancyMirrorDescent}
Daniel Dadush, Haotian Jiang, and Victor Reis.
\newblock A new framework for matrix discrepancy: Partial coloring bounds via
  mirror descent.
\newblock {\em arXiv preprint arXiv:2111.03171}, 2021.

\bibitem[FW08]{Forrester-2008-SelbergIntegral}
Peter Forrester and Sven Warnaar.
\newblock The importance of the {Selberg} integral.
\newblock {\em Bulletin of the American Mathematical Society}, 45(4):489--534,
  2008.

\bibitem[GKPX23]{GKPX-2023-OGPDiscrepancy}
David Gamarnik, Eren~C K{\i}z{\i}lda{\u{g}}, Will Perkins, and Changji Xu.
\newblock Geometric barriers for stable and online algorithms for discrepancy
  minimization.
\newblock {\em arXiv preprint arXiv:2302.06485}, 2023.

\bibitem[GOWZ10]{GOWZ-2010-FunctionsHalfspacesProductDistributions}
Parikshit Gopalan, Ryan O'Donnell, Yi~Wu, and David Zuckerman.
\newblock Fooling functions of halfspaces under product distributions.
\newblock In {\em 2010 IEEE 25th Annual Conference on Computational
  Complexity}, pages 223--234. IEEE, 2010.

\bibitem[HRS21]{HRS-2021-MatrixDiscrepancyQuantumCommunication}
Samuel~B Hopkins, Prasad Raghavendra, and Abhishek Shetty.
\newblock Matrix discrepancy from quantum communication.
\newblock {\em arXiv preprint arXiv:2110.10099}, 2021.

\bibitem[HSSZ19]{HSSZ-2019-BalancingCovariatesGramSchmidt}
Christopher Harshaw, Fredrik S{\"a}vje, Daniel Spielman, and Peng Zhang.
\newblock Balancing covariates in randomized experiments with the
  {Gram}-{Schmidt} walk design.
\newblock {\em arXiv preprint arXiv:1911.03071}, 2019.

\bibitem[JKS19]{JKS-2019-OnlineDiscrepancyStochasticEnvy}
Haotian Jiang, Janardhan Kulkarni, and Sahil Singla.
\newblock Online geometric discrepancy for stochastic arrivals with
  applications to envy minimization.
\newblock {\em arXiv preprint arXiv:1910.01073}, 2019.

\bibitem[KKLO86]{KKLO-1986-AverageCaseNumberPartitioning}
Narendra Karmarkar, Richard~M Karp, George~S Lueker, and Andrew~M Odlyzko.
\newblock Probabilistic analysis of optimum partitioning.
\newblock {\em Journal of Applied probability}, 23(3):626--645, 1986.

\bibitem[KLS20]{KLS-2020-FourDeviationsRankOne}
Rasmus Kyng, Kyle Luh, and Zhao Song.
\newblock Four deviations suffice for rank 1 matrices.
\newblock {\em Advances in Mathematics}, 375:107366, 2020.

\bibitem[Kri16]{Krishnapur-2016-AntiConcentration}
Manjunath Krishnapur.
\newblock Anti-concentration inequalities, 2016.

\bibitem[KRR23]{KRR-2023-OnlineDiscrepancyMinimization}
Janardhan Kulkarni, Victor Reis, and Thomas Rothvoss.
\newblock Optimal online discrepancy minimization.
\newblock {\em arXiv preprint arXiv:2308.01406}, 2023.

\bibitem[LM15]{LM-2015-ConstructiveDiscrepancyRandomWalk}
Shachar Lovett and Raghu Meka.
\newblock Constructive discrepancy minimization by walking on the edges.
\newblock {\em SIAM Journal on Computing}, 44(5):1573--1582, 2015.

\bibitem[Lov10]{Lovett-2010-GaussianPolynomialAntiConcentration}
Shachar Lovett.
\newblock An elementary proof of anti-concentration of polynomials in
  {Gaussian} variables.
\newblock In {\em Electron. Colloquium Comput. Complex.}, volume~17, page 182,
  2010.

\bibitem[LPP91]{LPP-1991-NonCommutativeKhintchinePaley}
Fran{\c{c}}oise Lust-Piquard and Gilles Pisier.
\newblock Non commutative {Khintchine} and {Paley} inequalities.
\newblock {\em Arkiv f{\"o}r matematik}, 29(1):241--260, 1991.

\bibitem[LSS21]{LSS-2021-GaussianFixedPointWalk}
Yang~P Liu, Ashwin Sah, and Mehtaab Sawhney.
\newblock A {Gaussian} fixed point random walk.
\newblock {\em arXiv preprint arXiv:2104.07009}, 2021.

\bibitem[Mat99]{Matousek-1999-GeometricDiscrepancy}
Jiri Matousek.
\newblock {\em Geometric discrepancy: An illustrated guide}, volume~18.
\newblock Springer Science \& Business Media, 1999.

\bibitem[Mec06]{Meckes-2006-Thesis}
Elizabeth Meckes.
\newblock {\em An infinitesimal version of {Stein}'s method of exchangeable
  pairs}.
\newblock PhD thesis, 2006.

\bibitem[Mek14]{Meka-2014-DiscrepancyMatrixSpencer}
Raghu Meka.
\newblock Discrepancy and beating the union bound.
\newblock Windows on Theory Blog, 2014.
\newblock URL:
  \url{https://windowsontheory.org/2014/02/07/discrepancy-and-beating-the-union-bound/}
  (version: 2014-02-07).

\bibitem[MSS14]{MSS-2014-RamanujanGraphsKadisonSinger}
Adam~W Marcus, Daniel~A Spielman, and Nikhil Srivastava.
\newblock Ramanujan graphs and the solution of the {Kadison}-{Singer} problem.
\newblock {\em arXiv preprint arXiv:1408.4421}, 2014.

\bibitem[MSS15]{MSS-2015-InterlacingFamiliesKadisonSinger}
Adam~W Marcus, Daniel~A Spielman, and Nikhil Srivastava.
\newblock Interlacing families {II}: Mixed characteristic polynomials and the
  {Kadison}-{Singer} problem.
\newblock {\em Annals of Mathematics}, pages 327--350, 2015.

\bibitem[Pin15]{Pinelis-2017-VandermondePost}
Iosif Pinelis.
\newblock Maximum of the {Vandermonde} determinant / minimum of the logarithmic
  energy.
\newblock MathOverflow, 2015.
\newblock URL: \url{https://mathoverflow.net/q/210229} (version: 2017-04-13).

\bibitem[PS95]{PS-1995-KhintchineInequalities}
G~Peshkir and Albert~Nikolaevich Shiryaev.
\newblock The {Khintchine} inequalities and expanding sphere of their action.
\newblock {\em Russian Mathematical Surveys}, 50(5):849, 1995.

\bibitem[PX21]{PX-2021-IsingPerceptron1RSB}
Will Perkins and Changji Xu.
\newblock Frozen 1-{RSB} structure of the symmetric {Ising} perceptron.
\newblock In {\em Proceedings of the 53rd Annual ACM SIGACT Symposium on Theory
  of Computing}, pages 1579--1588, 2021.

\bibitem[Rot17]{Rothvoss-2017-ConstructiveDiscrepancyMinimization}
Thomas Rothvoss.
\newblock Constructive discrepancy minimization for convex sets.
\newblock {\em SIAM Journal on Computing}, 46(1):224--234, 2017.

\bibitem[Spe77]{Spencer-1977-BalancingGames}
Joel Spencer.
\newblock Balancing games.
\newblock {\em Journal of Combinatorial Theory, Series B}, 23(1):68--74, 1977.

\bibitem[Spe85]{Spencer-1985-SixDeviations}
Joel Spencer.
\newblock Six standard deviations suffice.
\newblock {\em Transactions of the American mathematical society},
  289(2):679--706, 1985.

\bibitem[Spe94]{Spencer-1994-TenLecturesProbabilisticMethod}
Joel Spencer.
\newblock {\em Ten lectures on the probabilistic method}.
\newblock SIAM, 1994.

\bibitem[SS23]{SS-2023-SymmetricPerceptronThreshold}
Ashwin Sah and Mehtaab Sawhney.
\newblock Distribution of the threshold for the symmetric perceptron.
\newblock {\em arXiv preprint arXiv:2301.10701}, 2023.

\bibitem[SV13]{SV-2013-CovarianceEstimation2Moments}
Nikhil Srivastava and Roman Vershynin.
\newblock Covariance estimation for distributions with 2+$\varepsilon$ moments.
\newblock 2013.

\bibitem[Tao12]{Tao-2012-RandomMatrixTheory}
Terence Tao.
\newblock {\em Topics in random matrix theory}, volume 132.
\newblock American Mathematical Soc., 2012.

\bibitem[TMR20]{TMR-2020-BalancingGaussianVectors}
Paxton Turner, Raghu Meka, and Philippe Rigollet.
\newblock Balancing {Gaussian} vectors in high dimension.
\newblock In {\em 33rd Annual Conference on Learning Theory (COLT 2020)}, pages
  3455--3486, 2020.

\bibitem[Tro15]{Tropp-2015-MatrixConc}
Joel~A Tropp.
\newblock An introduction to matrix concentration inequalities.
\newblock {\em Foundations and Trends{\textregistered} in Machine Learning},
  8(1-2):1--230, 2015.

\bibitem[VDN92]{VDN-1992-FreeRandomVariables}
Dan~V Voiculescu, Ken~J Dykema, and Alexandru Nica.
\newblock {\em Free random variables}.
\newblock Number~1. American Mathematical Society, 1992.

\bibitem[Ver18]{Vershynin-2018-HDP}
Roman Vershynin.
\newblock {\em High-dimensional probability: An introduction with applications
  in data science}.
\newblock Cambridge University Press, 2018.

\bibitem[VH17]{vH-2017-StructuredRandomMatrices}
Ramon Van~Handel.
\newblock Structured random matrices.
\newblock {\em Convexity and concentration}, pages 107--156, 2017.

\bibitem[Zou11]{Zouzias-2011-MatrixHyperbolicCosine}
Anastasios Zouzias.
\newblock A matrix hyperbolic cosine algorithm and applications.
\newblock {\em arXiv preprint arXiv:1103.2793}, 2011.

\end{thebibliography}

\end{document}